\documentclass[a4paper,11pt]{article}
\pdfoutput=1 
\usepackage{jheppub}
\usepackage{etoolbox}% http://ctan.org/pkg/etoolbox
\makeatletter
\makeatother
\usepackage[T1]{fontenc}

\usepackage{subfig}

\usepackage{bm}

\newcommand{\area}{\text{area}}

\usepackage{tikz}
\usetikzlibrary{decorations.markings}
\usetikzlibrary{positioning, calc}
\usetikzlibrary{calc}
\usetikzlibrary{arrows}
\usepackage{tikz-3dplot}
\usepackage{graphicx}
\usepackage{mdwlist}
\usepackage{color}

\newcommand{\A}{\mathcal{A}}
\newcommand{\B}{\mathcal{B}}
\newcommand{\C}{\mathcal{C}}
\newcommand{\R}{\mathcal{R}}
\newcommand{\V}{\mathcal{V}}
\newcommand{\U}{\mathcal{U}}

%math environments
\newtheorem{theorem}{Theorem}

\newtheorem{definition}[theorem]{Definition}

\newtheorem{lemma}[theorem]{Lemma}

\newtheorem{protocol}[theorem]{Protocol}

\newenvironment{proof}[1][Proof]{\noindent\textbf{#1.} }{\ \rule{0.5em}{0.5em}}
\newenvironment{argument}[1][Argument]{\noindent\textbf{#1.} }{\ \rule{0.5em}{0.5em}}

\usetikzlibrary{decorations.pathreplacing,decorations.markings}
\usepackage{mathrsfs}

%specialized tikz settings:
\tikzset{
	%Define standard arrow tip
    >=stealth',
    %Define style for boxes
    punkt/.style={
           rectangle,
           rounded corners,
           draw=black, very thick,
           text width=6.5em,
           minimum height=2em,
           text centered},
    % Define arrow style
    pil/.style={
           ->,
           thick,
           shorten <=2pt,
           shorten >=2pt,},
    % style to apply some styles to each segment of a path
  on each segment/.style={
    decorate,
    decoration={
      show path construction,
      moveto code={},
      lineto code={
        \path [#1]
        (\tikzinputsegmentfirst) -- (\tikzinputsegmentlast);
      },
      curveto code={
        \path [#1] (\tikzinputsegmentfirst)
        .. controls
        (\tikzinputsegmentsupporta) and (\tikzinputsegmentsupportb)
        ..
        (\tikzinputsegmentlast);
      },
      closepath code={
        \path [#1]
        (\tikzinputsegmentfirst) -- (\tikzinputsegmentlast);
      },
    },
  },
  % style to add an arrow in the middle of a path
  mid arrow/.style={postaction={decorate,decoration={
        markings,
        mark=at position .5 with {\arrow[#1]{stealth'}}
      }}}
}

%some quantum symbols
 \newcommand{\ket}[1]{|#1\rangle}
\newcommand{\ketbra}[2]{|#1\rangle\!\langle#2|}

\usetikzlibrary{fadings}

%math environments

\title{Bulk private curves require large conditional mutual information}
\author[a]{Alex May}
\affiliation[a]{The University of British Columbia}
\emailAdd{may@phas.ubc.ca}

\abstract{We prove a theorem showing that the existence of ``private'' curves in the bulk of AdS implies two regions of the dual CFT share strong correlations. A private curve is a causal curve which avoids the entanglement wedge of a specified boundary region $\mathcal{U}$. The implied correlation is measured by the conditional mutual information $I(\mathcal{V}_1:\mathcal{V}_2\,|\,\mathcal{U})$, which is $O(1/G_N)$ when a private causal curve exists. The regions $\mathcal{V}_1$ and $\mathcal{V}_2$ are specified by the endpoints of the causal curve and the placement of the region $\mathcal{U}$. This gives a causal perspective on the conditional mutual information in AdS/CFT, analogous to the causal perspective on the mutual information given by earlier work on the connected wedge theorem. We give an information theoretic argument for our theorem, along with a bulk geometric proof. In the geometric perspective, the theorem follows from the maximin formula and entanglement wedge nesting. In the information theoretic approach, the theorem follows from resource requirements for sending private messages over a public quantum channel.}

\begin{document} 

\maketitle

%%%%%%%%%%%%%%%%%%%%%%%%%%%%%%%%%%%%%%%%%%%%%%%%
\section{Introduction}
%%%%%%%%%%%%%%%%%%%%%%%%%%%%%%%%%%%%%%%%%%%%%%%%

In the context of the AdS/CFT correspondence, the Ryu-Takayanagi formula \cite{ryu2006holographic} and its covariant generalizations \cite{hubeny2007covariant,wall2014maximin,engelhardt2015quantum,akers2020quantum} have deepened our understanding of how geometry and gravitational physics can be recorded into quantum mechanical degrees of freedom. Other proposals relate a variety of bulk geometric quantities to boundary entanglement, see \cite{umemoto2018entanglement,nguyen2018entanglement,dutta2019canonical,miyaji2015distance,dong2021holographic} for an incomplete listing. In all cases, boundary correlation is related to bulk spacelike surfaces. Recently, a qualitatively different connection between boundary entanglement and bulk geometry has been found. This connection is captured, at least in part, by the connected wedge theorem \cite{may2019quantum,may2020holographic,may2021quantum,may2021holographic}, which relates bulk \emph{light cones} rather than spacelike surfaces to boundary correlation. 

In this article we expand on this lightcone-entanglement connection by introducing a second theorem relating the geometry of light cones to boundary entanglement. We illustrate the new theorem in figure \ref{fig:privacyduality}. The theorem relates the existence of a special class of causal curves in the bulk to the quantum conditional mutual information of associated boundary regions becoming $O(1/G_N)$.

\begin{figure}
    \centering
    \subfloat[\label{fig:boundaryregions}]{
    \begin{tikzpicture}[scale=1.6]

    \draw (-2,0) -- (2,0) -- (2,2) -- (-2,2) -- (-2,0);
    
    \draw[fill=red!60!,opacity=0.8] (0,0.5) -- (0.5,1) -- (0,1.5) -- (-0.5,1) -- (0,0.5);
    \draw[thick] (0,0.5) -- (0.5,1) -- (0,1.5) -- (-0.5,1) -- (0,0.5);
    \node at (0,1) {$\U$};
    
    \draw[fill=red!60!,opacity=0.8] (-2,0.5) -- (-1.5,1) -- (-2,1.5) -- (-2,0.5);
    \draw[thick] (-2,0.5) -- (-1.5,1) -- (-2,1.5) -- (-2,0.5);
    \draw[fill=red!60!,opacity=0.8] (2,0.5) -- (1.5,1) -- (2,1.5) -- (2,0.5);
    \draw[thick] (2,0.5) -- (1.5,1) -- (2,1.5) -- (2,0.5);
    
    \node at (1.82,1) {$\U$};
    \node at (-1.82,1) {$\U$};
    
    \draw[fill=blue!60!,opacity=0.8] (1,0.5) -- (1.5,1) -- (1,1.5) -- (0.5,1) -- (1,0.5);
    \draw[thick] (1,0.5) -- (1.5,1) -- (1,1.5) -- (0.5,1) -- (1,0.5);
    \node at (1,1) {$\V_1$};
    
    \draw[fill=blue!60!,opacity=0.8] (-1,0.5) -- (-1.5,1) -- (-1,1.5) -- (-0.5,1) -- (-1,0.5);
    \draw[thick] (-1,0.5) -- (-1.5,1) -- (-1,1.5) -- (-0.5,1) -- (-1,0.5);
    \node at (-1,1) {$\V_2$};
    
    \node at (0,-1) {$ $};
    
    \draw[fill=gray,opacity=0.5, thick, black] (1,0) -- (1.2,0.2) -- (1,0.4) -- (0.8,0.2) -- (1,0);
    \node[below] at (1,0) {$\C$};
    
    \draw[fill=gray,opacity=0.5, thick, black] (-1,2) -- (-1.2,1.8) -- (-1,1.6) -- (-0.8,1.8) -- (-1,2);
    \node[above] at (-1,02) {$\R$};
    
    \end{tikzpicture}
    }
    \hfill
    \centering
    \subfloat[\label{fig:privacytheorem}]{
    \tdplotsetmaincoords{15}{0}
    \begin{tikzpicture}[scale=1.4,tdplot_main_coords]
    \tdplotsetrotatedcoords{0}{45}{0}
    \draw[gray] (-2,0,0) -- (-2,4,0);
    \draw[gray] (2,0,0) -- (2,4,0);
    
    \begin{scope}[tdplot_rotated_coords]
    
    %R2 filling
    \draw[domain=0:45,variable=\x,smooth, fill=red!60!,opacity=0.8] plot ({-2*sin(\x)}, {1+\x/45}, {2*cos(\x)}) -- plot ({-2*sin((45-\x))}, {3-(45-\x)/45}, {2*cos(45-\x)}) --  plot ({2*sin(\x)}, {3-\x/45}, {2*cos(\x)}) -- plot ({2*sin(45-\x)}, {1+(45-\x)/45}, {2*cos(45-\x)});
    
    \begin{scope}[canvas is xz plane at y=0]
    \draw[gray] (0,0) circle[radius=2] ;
    \end{scope}
    
    \begin{scope}[canvas is xz plane at y=4]
    \draw[gray] (0,0) circle[radius=2] ;
    \end{scope}
    
    \begin{scope}[canvas is xz plane at y=2]
    \draw[gray] (0,0) circle (2);
    \end{scope}
    
    \draw [domain=-45:45] plot ({2*cos(\x+90)},2, {2*sin(\x+90)});
    
    %C1 filling
    \draw[domain=0:45,variable=\x,smooth, fill=red!50!,opacity=0.8] plot ({-2*sin(\x+180)}, {1+\x/45}, {2*cos(\x+180)}) -- plot ({-2*sin((45-\x)+180)}, {3-(45-\x)/45}, {2*cos(45-\x+180)}) --  plot ({2*sin(\x+180)}, {3-\x/45}, {2*cos(\x+180)}) -- plot ({2*sin(45-\x+180)}, {1+(45-\x)/45}, {2*cos(45-\x+180)});
    
    \begin{scope}[canvas is xz plane at y=2]
    %R1 region
    \draw [domain=-180:180] plot ({2*cos(\x-90)}, {2*sin(\x-90)});
    \draw [domain=-45:-135,blue,ultra thick] plot ({2*cos(\x-90)}, {2*sin(\x-90)});
    \end{scope}
    
    %Ryu-Takayanagi surfaces of the input regions
    \draw[domain=45:135, thick] plot
    ({2*cos(\x)},{2},{2*sin(\x)-2.85});
    \draw[domain=45:135, thick] plot
    ({2*cos(\x)},{2},{-2*sin(\x)+2.85});
    
    %top of the left and right regions
    \foreach \x in {45,...,135}
    {
    \draw[red,opacity=0.3] (0,3,2) -- ({2*cos(\x)},{2},{-2*sin(\x)+2.85});
    \draw[red,opacity=0.3] (0,3,-2) -- ({2*cos(\x)},{2},{2*sin(\x)-2.85});
    }
    
    %top of the left and right regions
    \foreach \x in {45,...,135}
    {
    \draw[red,opacity=0.3] (0,1,2) -- ({2*cos(\x)},{2},{-2*sin(\x)+2.85});
    \draw[red,opacity=0.3] (0,1,-2) -- ({2*cos(\x)},{2},{2*sin(\x)-2.85});
    }
    
    %fill in the little gaps between the cones and the boundary coming from the curvature of the cylinder
    \foreach \x in {11,14,...,41}
    {
    \draw[red,opacity=0.3] plot (0,1,-2) -- ({2*sin(\x+180)}, {1+(\x)/45}, {2*cos(\x+180)});
    }

    \foreach \x in {11,14,...,41}
    {
    \draw[red,opacity=0.3] plot (0,1,2) -- ({2*sin(\x+180)}, {1+(\x)/45}, {-2*cos(\x+180)});
    }
    
    \foreach \x in {11,14,...,41}
    {
    \draw[red,opacity=0.3] plot (0,1,2) -- ({-2*sin(\x+180)}, {1+(\x)/45}, {-2*cos(\x+180)});
    }
    
    \foreach \x in {11,14,...,41}
    {
    \draw[red,opacity=0.3] plot (0,3,2) -- ({-2*sin(\x+180)}, {3-(\x)/45}, {-2*cos(\x+180)});
    }
    
    \foreach \x in {11,14,...,41}
    {
    \draw[red,opacity=0.3] plot (0,3,-2) -- ({2*sin(\x+180)}, {3-(\x)/45}, {2*cos(\x+180)});
    }
    
    \foreach \x in {11,14,...,41}
    {
    \draw[red,opacity=0.3] plot (0,3,-2) -- ({-2*sin(\x+180)}, {3-(\x)/45}, {2*cos(\x+180)});
    }
    
    \node at (-2,1.4,0) {$\U$};
    \node at (-2,1.5,4) {$\U$};
    
    %R2 diamond
    \draw[domain=0:45,variable=\x,smooth,thick] plot ({-2*sin(\x)}, {1+\x/45}, {2*cos(\x)});
    \draw[domain=0:45,variable=\x,smooth,thick] plot ({2*sin(\x)}, {1+\x/45}, {2*cos(\x)});
    \draw[domain=0:45,variable=\x,smooth,thick] plot ({-2*sin(\x)}, {3-\x/45}, {2*cos(\x)});
    \draw[domain=0:45,variable=\x,smooth,thick] plot ({2*sin(\x)}, {3-\x/45}, {2*cos(\x)});
    
    %R1 diamond
    %bottom right
    \draw[domain=0:45,variable=\x,smooth,thick] plot ({-2*sin(\x+180)}, {1+\x/45}, {2*cos(\x+180)});
    %bottom left
    \draw[domain=0:45,variable=\x,smooth,thick] plot ({2*sin(\x+180)}, {1+\x/45}, {2*cos(\x+180)});
    %top right
    \draw[domain=0:45,variable=\x,smooth,thick] plot ({-2*sin(\x+180)}, {3-\x/45}, {2*cos(\x+180)});
    %top left
    \draw[domain=0:45,variable=\x,smooth,thick] plot ({2*sin(\x+180)}, {3-\x/45}, {2*cos(\x+180)});
    
    %C diamond
    \foreach \x in {100,...,250}
    {
    \draw[gray,opacity=0.3] (2,0,0) -- ({0.55*cos(\x)+2.1},{0.33},{-0.55*sin(\x)});
    \draw[gray,opacity=0.3] (2,0.66,0) -- ({0.55*cos(\x)+2.1},{0.33},{-0.55*sin(\x)});
    }
    
    \draw[black] plot [mark=*, mark size=1] coordinates{(1.62,0.35,-0.1)};
    \node[below] at (2,0,0) {$\C$};
    \node[left] at (1.62,0.35,-0.2) {$c$};
    
    \draw[domain=-15:15,variable=\x,smooth,thick] plot ({2*sin(\x+90)}, {0.333}, {2*cos(\x+90)});
    \draw[domain=0:15,variable=\x,smooth, thick] plot ({2*sin(\x+90)}, {\x/45}, {2*cos(\x+90)});
    \draw[domain=0:-15,variable=\x,smooth, thick] plot ({2*sin(\x+90)}, {-\x/45}, {2*cos(\x+90)});
    \draw[domain=0:-15,variable=\x,smooth, thick] plot ({2*sin(\x+90)}, {0.666+\x/45}, {2*cos(\x+90)});
    \draw[domain=0:15,variable=\x,smooth, thick] plot ({2*sin(\x+90)}, {0.666-\x/45}, {2*cos(\x+90)});
    
    %RT surface for input region
    \draw[domain=100:250, thick] plot ({0.55*cos(\x)+2.1},{0.33},{-0.55*sin(\x)});
    
    %R region
    \foreach \x in {-72,...,72}
    {
    \draw[gray,opacity=0.3] (-2,4-0.66,0) -- ({0.55*cos(\x)-2.1},{4-0.33},{-0.55*sin(\x)});
    \draw[gray,opacity=0.3] (-2,4,0) -- ({0.55*cos(\x)-2.1},{4-0.33},{-0.55*sin(\x)});
    }
    
    \draw[domain=165:195,variable=\x,smooth,thick] plot ({2*sin(\x+90)}, {4-0.333}, {2*cos(\x+90)});
    \draw[domain=0:15,variable=\x,smooth, thick] plot ({2*sin(\x-90)}, {4-\x/45}, {2*cos(\x-90)});
    \draw[domain=0:-15,variable=\x,smooth, thick] plot ({2*sin(\x-90)}, {4+\x/45}, {2*cos(\x-90)});
    \draw[domain=0:-15,variable=\x,smooth, thick] plot ({2*sin(\x-90)}, {4-0.66-\x/45}, {2*cos(\x-90)});
    \draw[domain=0:15,variable=\x,smooth, thick] plot ({2*sin(\x-90)}, {4-0.66+\x/45}, {2*cos(\x-90)});
    
    %RT surface for output region
    \draw[domain=-72:72, thick] plot ({0.55*cos(\x)-2.1},{4-0.33},{-0.55*sin(\x)});
    
    \node[below] at (1,1.7,0) {$\V_1$};
    \node[above] at (-1,2.3,0) {$\V_2$};
    
    \draw[postaction={on each segment={mid arrow}},purple] (1.62,0.35,-0.1) -- (-2+0.33,4-0.35,0);
    
    \begin{scope}[canvas is xz plane at y=2]
    %R1 region
    \draw [domain=45:135,blue,ultra thick] plot ({2*cos(\x-90)}, {2*sin(\x-90)});
    \end{scope}
    
    \draw[black] plot [mark=*, mark size=1] coordinates{(-2+0.33,4-0.37,0)};
    \node[below right] at (-2+0.33,4-0.35,0.1) {$r$};
    \node[above] at (-2,4,0) {$\R$};
    
    \end{scope}
    \end{tikzpicture}
    }
    \caption{(a) An example arrangement of regions considered in the privacy-duality theorem. We've taken a case where regions $\U$ consists of two connected components. $\V_1$ and $\V_2$ are defined as in theorem \ref{thm:main}. Notice that all causal curves from $c$ to $r$ pass through $\U$. (b) The same arrangement of regions shown in the bulk. There is a causal curve $\Gamma_P$ which passes from $c\in E_\C$ to $r\in E_\R$ without ever passing through the entanglement wedge of $\U$. This allows a secret message to travel from $\C$ to $\R$ without an eavesdropper, who has access to $\U$, being able to read it. The privacy-duality theorem gives that for the message to travel from $\C$ to $\R$ securely in the boundary picture, we must have $I(\V_1:\V_2\,|\,\U)=O(1/G_N)$. }
    \label{fig:privacyduality}
\end{figure}
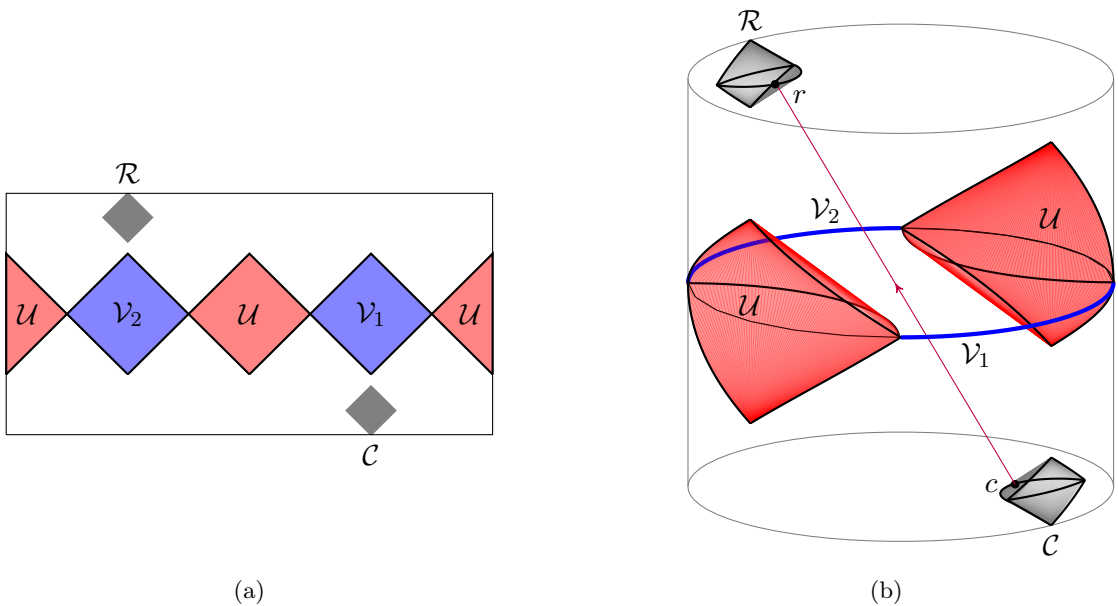

The connected wedge theorem was first found by studying quantum tasks \cite{kent2012quantum}, which are quantum computations with inputs and outputs that occur at designated spacetime locations. In \cite{may2019quantum,may2021quantum}, a general framework was proposed for applying quantum tasks to holography. The theorem introduced here also follows from the holographic quantum tasks framework, but uses a different choice of task. In particular the argument uses a task involving an eavesdropper and the preservation of privacy across the bulk and boundary descriptions. For this reason we name the new theorem the \emph{privacy-duality theorem}.

To arrive at the the privacy-duality theorem, first recall that given a boundary region $\A$ the \emph{entanglement wedge} of $\A$, labelled $E_\A$, is the bulk region which holds the same information as $\A$ \cite{czech2012gravity,headrick2014causality,wall2014maximin,jafferis2016relative,dong2016reconstruction,cotler2019entanglement}. To find $E_\A$, one looks for extremal surfaces anchored to $\A$ at the boundary and chooses the minimal area one\footnote{A more careful discussion of the entanglement wedge can be found in section \ref{sec:geometricprelim}.}. Then consider the following scenario. Three regions $\C$, $\R$, and $\U$, are defined in the boundary of an asymptotically AdS spacetime. We arrange for a message, consisting of a quantum system $X$, to be sent from a point $c$ in the entanglement wedge of $\C$ to a point $r$ in the entanglement wedge of $\R$. The message should be sent subject to the following constraint: An eavesdropper, call her Eve, may choose to access region $\U$, by for example making arbitrary measurements of $\U$. In the bulk picture this corresponds to Eve gaining access to the entanglement wedge $E_{\U}$. Our constraint is that Eve should learn nothing about the message, in which case we say the message is secure.

In some cases there is an apparent discrepancy in how hard it is to send this secret message from $\C$ to $\R$ in the bulk and boundary descriptions. For example, consider the arrangement of regions shown in figure \ref{fig:privacyduality}. In the bulk description there is a causal curve from $\C$ to $\R$ that avoids $E_{\U}$, so we can send a secure message from inside of $\C$ to $\R$ by simply moving the message along this curve. A curve which is causal and avoids $E_\U$ we will say is \emph{private with respect to $E_\U$}. Since in the bulk description the message is inaccessible to $E_{\U}$, it should be inaccessible to $\U$ in the boundary description. However, in the boundary all causal curves from $\C$ to $\R$ pass through $\U$. How then can the message travel securely from $\C$ to $\R$?

In quantum cryptography, the problem of sending secret messages has been studied in detail \cite{schumacher2006quantum,brandao2012quantum}. In their setting, the analogue of the region $\U$ is a \emph{public channel}. In particular, a quantum system $\alpha$ can be sent from $\C$ to $\R$, but there is the possibility that $\alpha$ will be received by Eve rather than the intended recipient. The general setting considered by quantum cryptographers is shown as figure \ref{fig:secrecysetup}. A resource system $\rho_{AB}$ may also be used to assist in sending the secret message. System $A$ is input, along with the message system $X$, to an encoding channel ${V}_{XA\rightarrow a\alpha}$, which we take to be an isometry. Only the system $\alpha$ is sent through the public channel while system $a$ is retained. Then, the receiver attempts to recover the message from the $\alpha B$ system. An encoding and decoding scheme for sending secret messages, along with choice of resource state $\rho_{AB}$, is known as a \emph{quantum one-time pad}. 

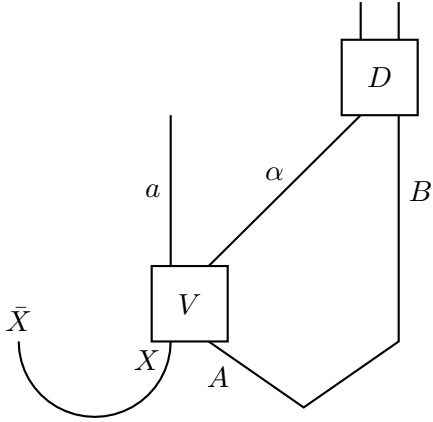
\begin{figure}
    \centering
    \begin{tikzpicture}[scale=0.5]
    
    \draw[black,thick] (0,0) rectangle (2,2);
    \node at (1,1) {$V$};
    
    \draw[black,thick] (5,6) rectangle (7,8);
    \node at (6,7) {$D$};
    
    \draw[domain=-180:0,thick] plot ({2*cos(\x)-1.5},{2*sin(\x)});
    
    \draw[black,thick] (1.5,0) -- (4,-1.75) -- (6.5,0) -- (6.5,6);
    
    \draw[black, thick] (0.5,2) -- (0.5,6); 
    \node[left] at (0.5,4) {$a$};
    
    \draw[black,thick] (5.5,8) -- (5.5,9);
    \draw[black,thick] (6.5,8) -- (6.5,9);
    
    \draw[thick] (1.5,2) -- (5.5,6);
    \node[above left] at (3.75,4) {$\alpha$};
    
    \node[right] at (6.5,4) {$B$};
    
    \node[above] at (-3.5,0) {$\bar{X}$};
    \node[left] at (0.5,-0.5) {$X$};
    
    \node[below] at (1.75,-0.4) {$A$};
    
    \end{tikzpicture}
    \caption{Circuit diagram of a general encoding and decoding procedure for sending secret messages over a public channel. The message $X$ is maximally entangled with a reference system $\bar{X}$. The encoding map $\mathcal{V}_{XA\rightarrow a\alpha}$ is applied to the message along with part of the resource state $A$, its output is divided into $a$ and $\alpha$ subsystems. $a$ is kept secure. $\alpha$ is sent over the public channel. The decoding procedure attempts to recover the $X$ system from $\alpha B$. System $\alpha$ should reveal nothing about $X$.}
    \label{fig:secrecysetup}
\end{figure}

One can prove that for the message to be sent correctly and securely, a particular pattern of correlation must be present among the $a,\alpha$ and $B$ systems. In particular the conditional mutual information $I(a:B|\alpha)$, defined by
\begin{align}
    I(a:B|\alpha) = I(B:a\alpha) - I(B:\alpha)
\end{align}
must be twice as large as the entropy of the message. The intuition for this is as follows. Take $X$ to be in a maximally entangled state with a reference system $\bar{X}$. Using this along with $V_{XA\rightarrow a\alpha}$ being an isometry, the above is equal to $I(B:A)-I(B:\alpha)$. Thus the requirement is that the sender and receiver should share correlation in their ``resource'' state on the $AB$ systems, and additionally that the $\alpha$ system sent through the public channel should not reveal too much of that correlation. 

Building on intuition stemming from the quantum cryptographers' results, we will arrive at a theorem relating bulk private curves to boundary conditional mutual information. To state it, we need to fix some notation. Given a spacetime region $\A$, we denote its causal future and past by $J^{\pm}(\A)$ when considering the bulk geometry, and $\hat{J}(\A)$ when restricting to the boundary geometry. Additionally, given a region $\A$ we will denote its boundary domain of dependence by $\hat{D}(\A)$, and its spacelike complement by $\A'$.\footnote{The boundary domain of dependence $\hat{D}(\A)$ is the set of all boundary points $p$ such that every causal curve through $p$ intersects $\A$. The spacelike complement is the set of all points spacelike separated from every point in $\A$.} Finally, we will say an extremal surface is unique if all other extremal surfaces homologous to the same boundary region have generalized entropy larger than the minimal one at order $O(1/G_N)$\footnote{Generalized entropy and homologous are defined in section \ref{sec:geometricprelim}.}. We can now state the theorem.\footnote{As discussed in more detail in section \ref{sec:geometric}, the theorem applies to holographic systems where the von Neumann entropy is calculated by the maximin formula, and the quantum focusing conjecture holds.}

\begin{theorem}\label{thm:main}
\textbf{(Privacy-duality)} Consider domains of dependence $\C$, ${\R}$, and ${\U}$ in the boundary of an asymptotically locally AdS spacetime, along with the corresponding bulk entanglement wedges $E_\C$, $E_\R$, and $E_\U$. Assume there is no private curve from $\C$ to $\R$ with respect to $\U$ in the boundary geometry. Define boundary regions
\begin{align}
    \V_1 &= \hat{D}(\hat{J}^+(\C) \cap \U' \cap \partial \Sigma), \\
    \V_2 &= \hat{D}(\hat{J}^-(\R) \cap \U' \cap \partial \Sigma),
\end{align}
where $\partial \Sigma$ is any Cauchy surface for the boundary which includes a Cauchy surface for $\U$.\footnote{Said another way, this means there exists a spacelike region $u$ such that $u\subseteq \partial \Sigma$ and $D(u)=\U$.} Then if there is a private curve in the bulk from $E_\C$ to $E_\R$ with respect to $E_\U$, and the minimal extremal surface homologous to $\U$ is unique, it follows that $I(\V_1:\V_2\,|\,\U) = O(1/G_N)$.
\end{theorem}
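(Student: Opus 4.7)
My plan is to follow the geometric strategy hinted at in the abstract, adapting the maximin-plus-entanglement-wedge-nesting argument from the connected wedge theorem. The goal is to argue that strong subadditivity, which gives $I(\V_1:\V_2|\U)\geq 0$, is saturated only up to an area defect of $O(1)$, forcing the CMI to be $O(1/G_N)$.

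First I would invoke the maximin formula to choose a bulk Cauchy slice $\Sigma$ simultaneously containing the minimal extremal surfaces $\gamma_\U$, $\gamma_{\V_1\U}$, $\gamma_{\V_2\U}$, and $\gamma_{\V_1\V_2\U}$, and denote by $R_A$ the homology region on $\Sigma$ bounded by $\gamma_A$ and the corresponding boundary slice. Entanglement wedge nesting gives $R_\U\subseteq R_{\V_1\U}$ and $R_\U\subseteq R_{\V_2\U}$. Define the outer and inner envelopes $\sigma^{+}=\partial(R_{\V_1\U}\cup R_{\V_2\U})$ and $\sigma^{-}=\partial(R_{\V_1\U}\cap R_{\V_2\U})$, restricted to the bulk portion of $\Sigma$. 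A standard area rearrangement yields $|\sigma^{+}|+|\sigma^{-}|=|\gamma_{\V_1\U}|+|\gamma_{\V_2\U}|$, with $\sigma^{+}$ homologous to $\V_1\V_2\U$ and $\sigma^{-}$ homologous to $\U$, reproducing the geometric proof of SSA. To sharpen the inequality into a strict $O(1)$ gap, I would show that the private bulk curve $\Gamma_P$ crosses $\Sigma$ at a point $p$ lying in $R_{\V_1\U}\cap R_{\V_2\U}$ but strictly outside $R_\U$, so that $R_{\V_1\U}\cap R_{\V_2\U}\supsetneq R_\U$ by a macroscopic amount. The uniqueness hypothesis on $\gamma_\U$ then forces $|\sigma^{-}|-|\gamma_\U|=\Omega(1)$, and the chain $I(\V_1:\V_2|\U)\geq(|\sigma^{-}|-|\gamma_\U|)/4G_N$ delivers the result.

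The main technical obstacle is establishing $p\in R_{\V_1\U}\cap R_{\V_2\U}$. That $p\notin R_\U$ is immediate from the curve being private with respect to $E_\U$, but membership in the two larger wedges is not automatic, since a bulk causal curve can in principle leave $E_{\V_1\U}$ through $\gamma_{\V_1\U}$ into the complementary wedge. The step I anticipate being hardest is a focusing-plus-entanglement-wedge-nesting argument in the spirit of the connected wedge proofs, showing that a bulk causal curve originating in $E_\C\subseteq E_{\V_1\U}$ and remaining outside $E_\U$ cannot exit $E_{\V_1\U}$ without first entering $E_\U$, and symmetrically for the approach to $r\in E_\R\subseteq E_{\V_2\U}$; the boundary-side assumption that no boundary private curve exists should be what rules out unwanted escape routes into the complementary wedge. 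As an independent consistency check I would run the information-theoretic argument sketched in the introduction, identifying $\U$ with the public channel $\alpha$ and $\V_1,\V_2$ with the retained and received systems $a,B$ in a quantum one-time pad, with message dimension $\dim X=e^{\Omega(1/G_N)}$ set by the capacity of the bulk private curve, so that $I(a:B|\alpha)\geq 2S(X)$ yields the same $\Omega(1/G_N)$ lower bound on $I(\V_1:\V_2|\U)$.
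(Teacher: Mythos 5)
Your overall strategy matches the paper's geometric proof, and the core idea --- locate the crossing point $P=\Gamma_P\cap\Sigma$ inside $E_{\V_1\U}\cap E_{\V_2\U}$ but outside $E_\U$, then use uniqueness of $\gamma_\U$ to convert $E_\U\neq E_{\V_1\U}\cap E_{\V_2\U}$ into an $O(1/G_N)$ area gap --- is exactly right. However, there are two concrete problems with the plan as written. First, you cannot simply ``invoke the maximin formula to choose'' a single bulk Cauchy slice containing all four of $\gamma_\U,\gamma_{\V_1\U},\gamma_{\V_2\U},\gamma_{\V_1\V_2\U}$. Maximin gives a common minimizing slice for nested pairs, but not for the crossing pair $\gamma_{\V_1\U},\gamma_{\V_2\U}$; the existence of such a slice is precisely condition 1 of lemma \ref{lemma:QCMIsaturation} and it can fail. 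The paper's proof is a dichotomy: either no such slice exists, in which case the Wall/Akers--Penington maximin argument (assuming quantum focusing) already forces an $O(1/G_N)$ area term in $I(\V_1:\V_2\,|\,\U)$, or the slice exists and the private-curve argument shows condition 3 fails. Your proposal silently assumes the second branch and never supplies the first.

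Second, the step you flag as hardest --- showing that a causal curve from $E_\C$ avoiding $E_\U$ ``cannot exit $E_{\V_1\U}$'' --- is both unnecessary and not what is actually true (the curve is free to leave $E_{\V_1\U}$). The correct argument is much simpler: $\hat{J}^+(\C)\cap\partial\Sigma\subseteq\V_1\cup\U$ by construction, so $\C\subseteq\hat{D}(\V_1\cup\U)$, and entanglement wedge nesting places $c\in E_\C\subseteq E_{\V_1\U}=D(\Sigma\cap E_{\V_1\U})$. Since $\Gamma_P$ passes through a point of this domain of dependence, its inextendible extension must meet $\Sigma\cap E_{\V_1\U}$; but a causal curve meets the bulk Cauchy slice $\Sigma$ exactly once, at $P$, so $P\in\Sigma\cap E_{\V_1\U}$ regardless of where else the curve wanders. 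No focusing argument is needed here, and the no-boundary-private-curve hypothesis plays no role in this step: its job is to guarantee $\V_1\cap\V_2=\varnothing$ so that $I(\V_1:\V_2\,|\,\U)$ is well defined at all (lemma \ref{lemma:v1v2upositions}), not to confine the bulk curve. Your information-theoretic cross-check is consistent with section \ref{sec:tasksperspective}, though there too the uniqueness of $\gamma_\U$ must re-enter when you check that inserting the $n$ message qubits does not move $E_\U$ onto the curve.
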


In the main text we give a detailed quantum information argument for this theorem. In that context, the conditional mutual information becomes $O(1/G_N)$ when there is a private curve because $O(1/G_N)$ qubits can be sent securely through the bulk geometry, which in the boundary requires condition mutual information of the same order. We can also understand the theorem geometrically. Using the HRT formula we can relate the conditional mutual information to properties of quantum extremal surfaces in the bulk. The HRT formula gives the entropy as an area term, which is $O(1/G_N)$, and a bulk entropy term. The theorem then amounts to the statement that a bulk private curve implies a non-zero area term. The proof of this assumes entanglement wedge nesting, which states that for boundary regions $\A,\B$ with $\A \subseteq \B$ we have $E_\A\subseteq E_\B$. For quantum extremal surfaces entanglement wedge nesting follows from the quantum focusing conjecture \cite{akers2020quantum}.

\vspace{0.3cm}
%%%%%%%%%%%%%%%%%%%%%%%%%%%%%%%%%%%%%%%%%%%%%%%%
\noindent \textbf{Summary of notation}
%%%%%%%%%%%%%%%%%%%%%%%%%%%%%%%%%%%%%%%%%%%%%%%%
\vspace{0.3cm}

\noindent We briefly summarize the notation used in this paper. 
\begin{itemize*}
    \item We denote boundary spacetime regions by script capital letters $\A,\B,...$. 
    \item The entanglement wedge of a boundary region $\A$ is denoted by $E_{\A}$.
    \item The HRT surface associated with a boundary region $\A$ we denote by $\gamma_{\A}$. We'll use $\partial$ to denote the spacelike boundary of a region, so that $\partial E_{\A} = \gamma_{\A}$
    \item We use $J^\pm(\cdot)$ for the causal future or past of a spacetime region taken in the bulk geometry, and $\hat{J}^\pm(\cdot)$ for the casual future or past taken in the boundary geometry.
    \item $D(\cdot)$ denotes the domain of dependence, taken in the bulk geometry. $\hat{D}(\cdot)$ denotes the domain of dependence taken in the boundary geometry. 
    \item We use a prime to denote the spacelike complement of a region, $\A' = \{p : p \not\in \hat{J}^+(\A)\, \text{and}\, p\not\in \hat{J}^-(\A) \}$.
\end{itemize*}   

\vspace{0.3cm}
%%%%%%%%%%%%%%%%%%%%%%%%%%%%%%%%%%%%%%%%%%%%%%%%
\noindent \textbf{Overview of the paper}
%%%%%%%%%%%%%%%%%%%%%%%%%%%%%%%%%%%%%%%%%%%%%%%%
\vspace{0.3cm}

In section \ref{sec:geometric}, we begin by establishing $I(\V_1:\V_2\,|\,\U)$ is well defined given the definitions of regions $\V_1,\V_2$ and $\U$, which requires $\V_1$ and $\V_2$ do not overlap, and touch $\U$ only at their boundaries. We also review the maximin formula, and using maximin understand the geometric conditions for $I(\V_1:\V_2\,|\,\U)=O(1/G_N)$. Finally we give the bulk, geometric proof of theorem \ref{thm:main}.  

In section \ref{sec:tasksperspective}, we begin by briefly reviewing holographic quantum tasks. We also discuss in detail how secret messages can be sent in a spacetime context, and the correlations associated with doing so. Finally we turn to an argument for the privacy-duality theorem from this quantum tasks perspective. 

In section \ref{sec:examples} we give a few examples of applications of the privacy-duality theorem to various bulk solutions. This elucidates why the uniqueness condition of the privacy-duality theorem is necessary, and why it is the conditional mutual information and not the mutual information that appears. 

In section \ref{sec:discussion} we conclude with some discussion. In particular, we discuss the relationship of this theorem to the connected wedge theorem, and some suggestions for future directions in which to explore the connections between causal features of bulk geometry and boundary correlation. 

Appendix \ref{sec:saturationlemma} proves a lemma giving conditions on the placement of extremal surfaces for the conditional mutual information to be $O(1/G_N)$, which we make use of in the geometric proof of the privacy-duality theorem.

In appendix \ref{sec:elementarypad} we review an elementary one-time pad (sometimes called "the" one-time pad) useful for sending qubits using private, perfectly correlated classical bits. This serves as useful intuition for our theorem. 

In appendix \ref{sec:whyQCMI?} we show that the privacy-duality theorem is false if the conditional mutual information $I(\V_1:\V_2\,|\,\U)$ is replaced with the mutual information $I(\V_1:\V_2)$. This can be seen by constructing a counterexample in the BTZ black hole geometry. In the quantum information picture we understand why this theorem should be false by finding a method for sending secret messages which maintains $I(\V_1:\V_2)=0$. 

%%%%%%%%%%%%%%%%%%%%%%%%%%%%%%%%%%%%%%%%%%%%%%%%
\section{Bulk perspective on the privacy-duality theorem}\label{sec:geometric}
%%%%%%%%%%%%%%%%%%%%%%%%%%%%%%%%%%%%%%%%%%%%%%%%

In this section we develop the geometric proof of the privacy-duality theorem. To begin we lay out some preliminary facts about the regions $\V_1,\V_2,\U$, and some background on the Ryu-Takayanagi formula and its generalizations.

%%%%%%%%%%%%%%%%%%%%%%%%%%%%%%%%%%%%%%%%%%%%%%%%
\subsection{Geometric preliminaries}\label{sec:geometricprelim}
%%%%%%%%%%%%%%%%%%%%%%%%%%%%%%%%%%%%%%%%%%%%%%%%

We recall the maximin formula \cite{wall2014maximin,akers2020quantum}, which is one way of stating the generalization of the Ryu-Takayanagi formula to dynamic spacetimes. 

Consider a spacelike boundary region $\A$. Then the maximin formula states that the von Neumann entropy of the state on $\A$ is given by
\begin{align}
    S(\A) = \max_{\Sigma} \min_{\gamma \in \Sigma} \left(\frac{\text{Area}[\gamma]}{4G_N} + S_b(E[\gamma])\right),
\end{align}
where the maximization is over bulk Cauchy surfaces $\Sigma$ that include region $\A$, the minimization is over codimension 2 surfaces $\gamma$ inside of $\Sigma$ and homologous to $\A$, and $S_b(E[\gamma])$ is the bulk entropy associated with the region $E[\gamma]$ in $\Sigma$ which is enclosed by $\A\cup \gamma$.

The quantity optimized over in the maximin formula is often called the \emph{generalized entropy}, 
\begin{align}
    S_{gen}[\gamma]\equiv \left(\frac{\text{Area}[\gamma]}{4G_N} + S_b(E[\gamma])\right).
\end{align}
We briefly recall the meaning of homologous. We say a surface $\gamma$ is homologous to a boundary region $\A$ if there exists a codimension 1 surface $S$ such that
\begin{align}
    \partial S = \A \cup \gamma.
\end{align}
We will call the surface picked out by the maximin procedure a \emph{quantum extremal surface} and label it by $\gamma_{\mathcal{A}}$. The region $E[\gamma_{\A}]$ picked out by the quantum extremal surface also plays an important role. We will define
\begin{align}
    E_\A \equiv D(E[\gamma_\A]).
\end{align}
The region $E_\A$ is called the entanglement wedge of $\A$. We will assume that the bulk entropy $S_b(E_\A)$ term is smaller than $O(1/G_N)$, although this can be violated in the context of evaporating black holes \cite{penington2020entanglement}. Note also that there are settings where quantum extremal surfaces do not calculate the von Neumann entropy correctly \cite{akers2020leading}. We will assume throughout this paper that we are considering states where the quantum maximin formula holds. 

The \emph{quantum conditional mutual information} (QCMI) is defined by
\begin{align}
    I(\A:\C|\B) \equiv S(\A\B) + S(\B\C) - S(\B) - S(\A\B\C).
\end{align}
We can calculate each term in this expression using the maximin formula. However in general the entropy of a subregion is infinite, with the infinite contributions associated with the boundaries of the various subregions. We will establish that $I(\V_1:\V_2\,|\,\U)$ is well defined and in particular finite. For this to be the case, we need that the regions $\V_1,\V_2,\U$ are all spacelike separated, although coincident boundaries between $\V_1$ and $\U$ and $\V_2$ and $\U$ are allowed. The necessary statements are given in the next lemma. 
\begin{lemma}\label{lemma:v1v2upositions}
Assume there is no curve in the boundary from region $\C$ to region $\R$ which is private with respect to $\U$. Then the following statements are true:
\begin{enumerate*}
    \item $\V_1 \cap \V_2=\varnothing$, that is $\V_1$ and $\V_2$ never overlap or touch. 
    \item $\V_1, \V_2$ never overlap $\U$ except possibly at the spacelike boundary of $\U$.
\end{enumerate*}
\end{lemma}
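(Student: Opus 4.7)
The plan is to exploit the following causal observation, which will serve as the engine for both parts: if $w\in\U'$ is either the past or future endpoint of a causal curve $\gamma$, then $\gamma$ is automatically private with respect to $\U$. Indeed, if $\gamma$ passed through some $u\in\U$ with $w$ as its future endpoint, then $w\in\hat{J}^+(u)\subseteq\hat{J}^+(\U)$, contradicting $w\in\U'$; the past-endpoint case is symmetric. Thus ``endpoint in $\U'$'' upgrades a causal curve to a private one for free.

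For Part 1, I would argue by contradiction. Suppose $p\in\V_1\cap\V_2$, and pick any inextendible causal curve $\lambda$ through $p$. Since $\partial\Sigma$ is a Cauchy surface for the boundary, $\lambda$ intersects it in exactly one point $w$. Unpacking the definitions of $\V_1$ and $\V_2$ as domains of dependence, that single intersection point must lie in $\hat{J}^+(\C)\cap\U'$ and also in $\hat{J}^-(\R)\cap\U'$. Hence there exist a causal curve from some $c\in\C$ to $w$ and a causal curve from $w$ to some $r\in\R$. Applying the key observation to each half (with $w\in\U'$ as the future endpoint of the first, past endpoint of the second), neither half can enter $\U$, so concatenation yields a private boundary causal curve from $\C$ to $\R$, contradicting the hypothesis.

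For Part 2, I would assume $p\in\V_1$ lies in the interior of $\U$ and derive a contradiction the same way. Because $\U=\hat{D}(u)$ with $u\subseteq\partial\Sigma$, every inextendible causal curve $\lambda$ through $p$ must cross $u$. Since $\partial\Sigma$ is a Cauchy surface containing $u$, $\lambda$ hits $\partial\Sigma$ exactly once, and that intersection point $w$ must lie in $u\subseteq\U$. On the other hand, $p\in\V_1$ forces $w\in\U'$. A point is always in its own causal future, so $\U\cap\U'=\varnothing$, and we have the desired contradiction. The argument for $\V_2$ is the time-reverse.

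The main obstacle I anticipate is the fine print around ``except possibly at the spacelike boundary of $\U$''. The contradictions above only have room to fail when $\lambda$ happens to meet $\partial\Sigma$ precisely on $\partial u$, i.e., on the codimension-two rim where $\U$ and $\U'$ touch; there, $w\in u$ need not strictly contradict $w\in\U'$. Carefully isolating that edge case amounts to showing that the only way for $\V_1,\V_2,\U$ to have nontrivial pairwise intersection is through coincidence along $\partial u$, which is exactly the exception clause stated in the lemma.
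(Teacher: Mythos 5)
Your proof is correct and takes essentially the same route as the paper: derive a contradiction by concatenating a causal curve from $\C$ and one to $\R$ through the overlap, and observe that an endpoint in $\U'$ forces privacy. Your version is in fact slightly more careful than the paper's, which works directly with $p$ and implicitly assumes $p\in\hat{J}^+(\C)\cap\U'$, whereas you route the argument through the unique intersection $w$ of an inextendible curve with the Cauchy slice, where those memberships follow immediately from the definition of the domain of dependence.
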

\begin{proof}
Suppose $\V_1\cap \V_2$ is not empty. Choose a point $p$ in $\V_1\cap \V_2$. Then since $p \in\V_1$ it is in the future of $\C$, so there is a causal curve from $\C$ to $p$. Similarly $p\in \V_2$ so $p$ is in the past of $\R$, so there is a causal curve from $p$ to $\R$. Taking these two segments together, we can construct a causal curve $\gamma$ from $\C$ to $\R$. Since $p \in \U'$, the causal curve $\gamma$ cannot pass into $\U$. Thus $\gamma$ is private. But then we have a private curve from $\C$ to $\R$, which violates our assumption. Thus we have $\V_1 \cap \V_2 = \varnothing$. This establishes point 1.

For point 2, note that $\V_1 \cap \U \subseteq \U'\cap \U = \partial \U$.\footnote{Recall that the $\partial$ symbol is used for the spatial boundary.}
\end{proof}

Next, we would like to know when there is an area term contribution to the QCMI. To do this we recall some arguments in \cite{headrick2014general, wall2014maximin}.
\begin{lemma}\label{lemma:QCMIsaturation}
Assume the quantum focusing conjecture \cite{akers2020quantum}. Then if the area terms in $I(\A:\C\,|\,\B)$ cancel, leaving only a possible bulk entropy contribution then all of the following statements hold:
\begin{enumerate*}
    \item There exists a bulk Cauchy surface $\Sigma$ such that all of $\gamma_\B$, $\gamma_{\A\B}$, $\gamma_{\B\C}$, $\gamma_{\A\B\C}$ are contained in $\Sigma$, and further $S_{gen}[\gamma_{\B}], S_{gen}[\gamma_{\A\B}],S_{gen}[ \gamma_{\B\C}], S_{gen}[\gamma_{\A\B\C}]$ are minimal in that slice. 
    \item $\area (\partial E_{\A\B\C}) = \area({\partial [E_{\A\B} \cup E_{\B\C}]})$
    \item $\area(\partial E_{\B}) = \area(\partial [E_{\A\B} \cap  E_{\B\C}])$
    \item $\partial E_{\B\C}\cap \partial E_{\A\B}=\emptyset$
\end{enumerate*}
\end{lemma}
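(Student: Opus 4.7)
The area part of $I(\A:\C\,|\,\B)$ is the combination $\Delta A := \area(\gamma_{\A\B}) + \area(\gamma_{\B\C}) - \area(\gamma_\B) - \area(\gamma_{\A\B\C})$, which is the holographic manifestation of strong subadditivity and is known to be non-negative. The plan is to derive this non-negativity directly via a cut-and-paste argument on a common Cauchy slice, and then read off the four claims from the requirement that every intermediate inequality in the derivation saturate when $\Delta A = 0$.

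The preparatory move is to place $\gamma_\B$, $\gamma_{\A\B}$, $\gamma_{\B\C}$, $\gamma_{\A\B\C}$ on a single Cauchy slice $\Sigma$ on which each minimizes generalized entropy; this is claim (1). I would invoke the maximin representative theorem of Wall, extended to quantum extremal surfaces by Akers--Penington. The key ingredient is entanglement wedge nesting, which follows from the quantum focusing conjecture and gives $E_\B \subseteq E_{\A\B}$, $E_{\B\C} \subseteq E_{\A\B\C}$; the resulting nested configuration allows a simultaneous maximin realization on one $\Sigma$. With $\Sigma$ in hand, consider the slice-regions $E_{\A\B}\cup E_{\B\C}$ and $E_{\A\B}\cap E_{\B\C}$. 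By EWN they are homologous to $\A\B\C$ and $\B$ respectively. Assuming the two QES meet in a measure-zero set, the elementary cut-and-paste identity gives
\begin{align*}
\area(\gamma_{\A\B}) + \area(\gamma_{\B\C}) = \area(\partial[E_{\A\B}\cup E_{\B\C}]) + \area(\partial[E_{\A\B}\cap E_{\B\C}]).
\end{align*}
Applying minimality of $\gamma_{\A\B\C}$ and $\gamma_\B$ on $\Sigma$ to the two terms on the right produces $\Delta A \geq 0$. Saturation forces both individual inequalities to be equalities, which are precisely claims (2) and (3).

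For claim (4), suppose $\gamma_{\A\B}$ and $\gamma_{\B\C}$ met. Since both are extremal, generic intersections are transversal, producing corners on the cut-and-paste surfaces $\partial[E_{\A\B}\cup E_{\B\C}]$ and $\partial[E_{\A\B}\cap E_{\B\C}]$. Smoothing these corners yields homologous competitors with strictly smaller total area, giving $\area(\partial[E_{\A\B}\cup E_{\B\C}]) + \area(\partial[E_{\A\B}\cap E_{\B\C}]) > \area(\gamma_{\A\B\C}) + \area(\gamma_\B)$, and hence $\Delta A > 0$, contradicting saturation. The main obstacle I anticipate is step (1): ensuring that a single slice $\Sigma$ simultaneously witnesses generalized-entropy minimality of all four QES. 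The classical statement is standard, but the quantum extension requires careful use of the QFC to control the bulk entropy contribution as one slides between candidate slices, so that one does not gain in area what one loses in bulk entropy (or vice versa) during the maximin rearrangement.
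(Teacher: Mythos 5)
Your cut-and-paste argument on a common slice, with saturation forcing claims (2) and (3) and a corner-smoothing argument for (4), is essentially the first half of the paper's proof (the paper builds the intersection term $-2\,\area(\partial E_{\A\B}\cap \partial E_{\B\C})$ directly into the identity rather than smoothing corners, but the content is the same). The genuine gap is in your treatment of claim (1). You present it as a preparatory step that can always be arranged by a "simultaneous maximin realization," justified by nesting. But the nesting chains only give $E_\B \subseteq E_{\A\B} \subseteq E_{\A\B\C}$ and $E_\B \subseteq E_{\B\C} \subseteq E_{\A\B\C}$; the regions $\A\B$ and $\B\C$ are \emph{not} nested with each other, and there is in general no single Cauchy slice on which both $\gamma_{\A\B}$ and $\gamma_{\B\C}$ minimize the generalized entropy. (Wall's classical SSA proof works around exactly this by using area-decreasing \emph{representatives} of the maximin surfaces on a chosen slice, not the maximin surfaces themselves; the common-slice theorems in the literature, e.g.\ theorem 5 of Akers--Penington, apply only to nested pairs such as $\B$ and $\A\B\C$.) So claim (1) is not an unconditional fact you can assume --- it is part of the conclusion of the lemma.

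Concretely, your argument proves "(2), (3), (4) hold, \emph{given} (1) and saturation," but is silent when (1) fails, and the lemma asserts that saturation \emph{implies} (1). The missing piece is the converse direction: if no common slice exists on which all four surfaces are simultaneously minimal, then the area terms do \emph{not} cancel. The paper supplies this by working on the common maximin slice for the nested pair $\B$, $\A\B\C$ and constructing competitor regions $E_{\A\B}'$, $E_{\B\C}'$ on that slice whose generalized entropies are strictly smaller than $S_{gen}[E_{\A\B}]$, $S_{gen}[E_{\B\C}]$ at order $O(1/G_N)$ (this is where the quantum focusing conjecture does real work, controlling the null-congruence deformation of the quantum extremal surfaces onto the slice); the conditional mutual information is then bounded below by these strict gaps. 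Without an argument of this type your proof does not establish the lemma.
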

We give the proof in appendix \ref{sec:saturationlemma}. In the proof of the privacy-duality theorem, we will see that the existence of a private curve implies that either condition 1 or condition 3 is violated. 

To complete the proof of the privacy-duality theorem, we will also need a lemma constraining the placement of the entanglement wedge. 
\begin{lemma}\label{lemma:EWnesting}
Given boundary regions $\A$, $\B$, with $\A \subseteq \B$, we have $E_{\A}\subseteq E_{\B}$. 
\end{lemma}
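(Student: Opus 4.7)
The plan is to carry out the quantum version of Wall's cut-and-swap argument, in the form given by Akers, Penington and Wall. First I would use the maximin definition of the quantum extremal surfaces to find a common bulk Cauchy slice $\Sigma$ that contains both $\gamma_\A$ and $\gamma_\B$, and on which each is a minimizer of generalized entropy subject to being homologous to $\A$ and $\B$ respectively. Producing such a slice is standard in the maximin framework, and follows by taking a slice witnessing the maximin for $\A$ and, by a deformation/compactness argument, pushing it so that it also contains a minimizer for $\B$ without changing the value of $S_{\text{gen}}[\gamma_\A]$.

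On $\Sigma$, write $E[\gamma_\A]$ and $E[\gamma_\B]$ for the homology regions of $\gamma_\A$ and $\gamma_\B$. Since $\A \subseteq \B$, the region $E_\cap \equiv E[\gamma_\A]\cap E[\gamma_\B]$ has $(\partial E_\cap) \cap \partial\Sigma = \A\cap\B = \A$, so its bulk boundary $\gamma_\cap$ is homologous to $\A$. Likewise $E_\cup \equiv E[\gamma_\A]\cup E[\gamma_\B]$ has bulk boundary $\gamma_\cup$ homologous to $\B$. By the minimality of $\gamma_\A$ and $\gamma_\B$ on $\Sigma$,
\begin{align}
    S_{\text{gen}}[\gamma_\A] + S_{\text{gen}}[\gamma_\B] \;\leq\; S_{\text{gen}}[\gamma_\cap] + S_{\text{gen}}[\gamma_\cup].
\end{align}

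Next I would separate the generalized entropies into their area and bulk entropy pieces. The area terms obey $\area(\gamma_\cap)+\area(\gamma_\cup)\leq \area(\gamma_\A)+\area(\gamma_\B)$: each piece of $\gamma_\A$ and of $\gamma_\B$ appears exactly once on the right-hand side, and any transverse intersection between $\gamma_\A$ and $\gamma_\B$ produces a corner on $\gamma_\cap\cup\gamma_\cup$ that can be rounded off to strictly decrease the total area. For the bulk entropies, strong subadditivity applied to the state on $E[\gamma_\A]$ and $E[\gamma_\B]$ gives
\begin{align}
    S_b(E[\gamma_\A]) + S_b(E[\gamma_\B]) \;\geq\; S_b(E_\cap) + S_b(E_\cup).
\end{align}
Adding these reverses the previous inequality, so both must be saturated; combined with the strict decrease under smoothing, this forces $\gamma_\A$ and $\gamma_\B$ to not intersect transversely. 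Hence one of $E[\gamma_\A]\subseteq E[\gamma_\B]$ or $E[\gamma_\B]\subseteq E[\gamma_\A]$ holds, and using $\A\subseteq\B$ on the anchored boundaries selects the first. Taking domains of dependence gives $E_\A = D(E[\gamma_\A])\subseteq D(E[\gamma_\B]) = E_\B$.

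The main obstacle will be the marginal case in which $\gamma_\A$ and $\gamma_\B$ touch tangentially along a lower-dimensional locus, where the naive area cut-and-swap is saturated and strong subadditivity is also saturated. Here the quantum focusing conjecture is needed to rule out that a local deformation exists which would strictly decrease the generalized entropy of one of $\gamma_\cap$ or $\gamma_\cup$ below the corresponding $S_{\text{gen}}[\gamma_\A]$ or $S_{\text{gen}}[\gamma_\B]$, contradicting extremality. This is the technical heart of the Akers--Penington--Wall argument, and I would import it directly.
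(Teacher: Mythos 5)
The paper does not prove this lemma itself; it defers to \cite{wall2014maximin,akers2020quantum}, and your sketch is a faithful reconstruction of exactly that cited cut-and-swap argument (common maximin slice, intersection/union exchange, area comparison plus strong subadditivity of the bulk entropy, with the quantum focusing conjecture covering the degenerate touching case). The one step I would tighten is the final dichotomy: rather than arguing from non-crossing that one homology region must contain the other (which is delicate if $\gamma_\A$ has closed components not anchored to the boundary), note that the saturation you already derived makes $\gamma_\cap$ itself a minimal surface homologous to $\A$ achieving the maximin value, so $E_\cap\subseteq E[\gamma_\B]$ gives the nesting directly, with $\gamma_\cap=\gamma_\A$ under the uniqueness or genericity assumptions implicit in the references.
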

For a proof see \cite{wall2014maximin,akers2020quantum}. The proof relies on the quantum focusing conjecture. With these preliminaries in hand we move on to prove the privacy-duality theorem in the next section.

%%%%%%%%%%%%%%%%%%%%%%%%%%%%%%%%%%%%%%%%%%%%%%%%
\subsection{Geometric proof}
%%%%%%%%%%%%%%%%%%%%%%%%%%%%%%%%%%%%%%%%%%%%%%%%

In this section we give the geometric proof of the privacy-duality theorem. We repeat the theorem for convenience. Recall that we say an extremal surface is unique if all other extremal surfaces homologous to the same boundary region have generalized entropy larger than the minimal one at order $O(1/G_N)$

\vspace{0.3cm}
\noindent \textbf{Theorem \ref{thm:main}:}\emph{\textbf{(Privacy-duality)} Consider domains of dependence $\C$, ${\R}$, and ${\U}$ in the boundary of an asymptotically locally AdS spacetime, along with the corresponding bulk entanglement wedges $E_\C$, $E_\R$, and $E_\U$. Assume there is no private curve from $\C$ to $\R$ with respect to $\U$ in the boundary geometry. Define boundary regions
\begin{align}
    \V_1 &= \hat{D}(\hat{J}^+(\C) \cap \U' \cap \partial \Sigma), \\
    \V_2 &= \hat{D}(\hat{J}^-(\R) \cap \U' \cap \partial \Sigma),
\end{align}
where $\partial \Sigma$ is any Cauchy surface for the boundary which includes a Cauchy surface for $\U$. Then if there is a private curve in the bulk from $E_\C$ to $E_\R$ with respect to $E_\U$, and the minimal extremal surface homologous to $\U$ is unique, it follows that $I(\V_1:\V_2\,|\,\U) = O(1/G_N)$.}
\vspace{0.3cm}

\begin{proof}
The four conditions in lemma \ref{lemma:QCMIsaturation} must all hold for the area terms to cancel and leave only bulk entropy terms, so we need to show at least one of those conditions fails. Our strategy will be to show that if we assume condition 1 (that there is a single bulk Cauchy surface containing all the extremal surfaces) then condition 3 fails. 

The argument is as follows. 
\begin{enumerate}
    \item \label{item:Pdefined} By assumption a private curve exists, which we label $\Gamma_P$. We've also assumed a bulk Cauchy surface $\Sigma$ containing all the extremal surfaces exists, which we label $\Sigma$. Define $P \equiv \Sigma\cap \Gamma_P$.
    \item Note that $\C \subseteq \hat{D}(\V_1\cup \U)$. To see this, recall that the boundary domain of dependence $\hat{D}(\V_1\cup \U)$ is the set of all points $p$ such that all boundary causal curves through $p$ intersect $\V_1\cup \U$. But $ \hat{J}^+(\C) \cap \partial \Sigma \subseteq \V_1\cup \U$, so every causal curve through $\C$ reaches $\V_1\cup \U$, so $\C \subseteq \hat{D}(\V_1\cup \U)$. 
    \item \label{item:1inclusion} We will show $P\subseteq E_{\V_1\U}$. First note that $\Gamma_P$ begins inside of $E_\C$, and since $\C \subseteq \hat{D}(\V_1\cup \U)$, entanglement wedge nesting (lemma \ref{lemma:EWnesting}) gives that $\Gamma_P$ begins inside of $E_{\V_1\U}$. But $\Gamma_P$ is a causal curve, and $E_{\V_1\U}$ is the domain of dependence of a codimension 1 surface $\Sigma \cap E_{\V_1 \U}$. Thus $\Gamma_P$ must cross through $\Sigma \cap E_{\V_1 \U}$, so we find $P \subseteq
    \Sigma \cap E_{\V_1 \U}$, which gives $P\subseteq E_{\V_1\U}$.
    \item \label{item:2inclusion} By a similar argument to the above, now using that $\R\subseteq \hat{D}(\V_2 \cup \U)$, we have that $P\subseteq E_{\V_2\U}$.  
    \item Combining points \ref{item:1inclusion} and \ref{item:2inclusion}, we have $P \in E_{\V_1\U}\cap E_{\V_2\U}$. But also by the definition of a private curve, we have that $P \not \in E_\U$. This gives $E_\U \neq E_{\V_1\U}\cap E_{\V_2\U}$.
    \item Finally, note that if minimal extremal surfaces are unique, then $E_\U\neq E_{\V_1\U}\cap E_{\V_2\U}$ implies $\area(\partial E_\U) \neq \area(\partial[E_{\V_1\U}\cap E_{\V_2\U}])$, as needed. 
\end{enumerate}
\end{proof}

\begin{figure}
    \centering
    \begin{tikzpicture}[scale=0.9]
    
    \node[right] at (3,0) {$\V_2$};
    \node[left] at (-3,0) {$\V_1$};
    
    \draw[thick] (0,0) circle (3);
    
    \draw[blue, thick] (2.12, 2.12) to [out=-135,in=135] (2.12, -2.12);
    \draw[blue, thick] (-2.12, 2.12) to [out=-45,in=45] (-2.12, -2.12);
    
    \draw[red,thick] (2.12, 2.12) to [out=-135,in=-45] (-2.12, 2.12);
    \draw[red,thick] (2.12, -2.12) to [out=135,in=45] (-2.12, -2.12);
    \node at (0,1.7) {$\gamma_\U$};
    \node at (0,-1.7) {$\gamma_\U$};
    
    \node at (-1.75,0) {$\gamma_{\V_2\U}$};
    \node at (1.75,0) {$\gamma_{\V_1\U}$};
    
    \node[above] at (0,3) {$\U$};
    \node[below] at (0,-3) {$\U$};
    
    \draw[domain=-45:-135,ultra thick, red] plot ({3*cos(\x)},{(3*sin(\x))});
    \draw[domain=45:135,ultra thick, red] plot ({3*cos(\x)},{(3*sin(\x))});
    
    \draw[domain=-45:45,ultra thick, blue] plot ({3*cos(\x)},{(3*sin(\x))});
    \draw[domain=135:225,ultra thick, blue] plot ({3*cos(\x)},{(3*sin(\x))});
    
    \node at (0,0) {$\otimes$};
    \node[right] at (0,0) {$P$};
    
    \end{tikzpicture}
    \caption{An example set-up for the proof of the privacy-duality theorem. We've taken a simple case, where $\U$ has only two connected components and the overall state is pure. In the case shown there is a bulk Cauchy slice $\Sigma$ which contains both of $\gamma_{\V_1 \U}$ and $\gamma_{\V_2\U}$. In this case, the proof shows that a private curve $\Gamma_P$ crosses $\Sigma$ at a point $P$ which is inside $E_{\V_1\U}$, inside $E_{\V_2\U}$, and outside $E_\U$. It follows that $E_{\U}\neq E_{\V_1\U} \cap E_{\V_2\U}$. Whenever the minimal extremal surface homologous to $\U$ is unique, this implies $I(\V_1:\V_2\,|\,\U)$ is $O(1/G_N)$.}
    \label{fig:slice}
\end{figure}
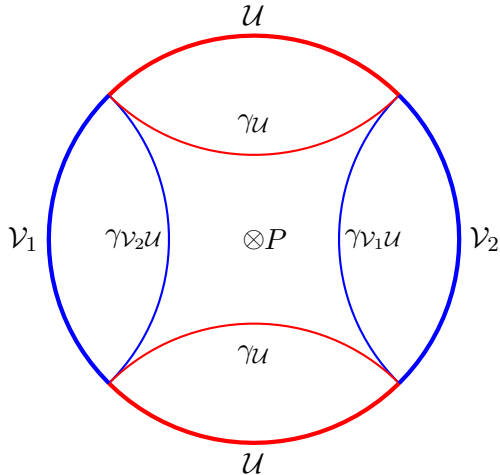

An example surface $\Sigma$ with the positioning of extremal surfaces described in the proof is given in figure \ref{fig:slice}. Note however that the proof applies more generally than the figure. For example it includes higher dimensions, allows $\U$ to have an arbitrary number of connected components, and includes cases where $\V_1\cup\V_2\cup \U$ does not make up the entire boundary.

The proof above relies on the quantum focusing conjecture to establish entanglement wedge nesting (lemma \ref{lemma:EWnesting}). As well, it relies on the maximin formula holding to relate bulk geometric properties to the value of the quantum conditional mutual information. We could also understand the theorem without using the maximin formula however, and in particular without reference to the quantum conditional mutual information or the von Neumann entropy. In particular, viewed as a geometric statement, the proof gives that the existence of a private curve implies one of conditions 1 or 3 in lemma \ref{lemma:QCMIsaturation} fail. The entanglement wedges appearing in those conditions can be understood as geometrically defined objects and need not reference the entropy in any dual theory.

%%%%%%%%%%%%%%%%%%%%%%%%%%%%%%%%%%%%%%%%%%%%%%%%
\section{Quantum information perspective on the privacy-duality theorem}\label{sec:tasksperspective}
%%%%%%%%%%%%%%%%%%%%%%%%%%%%%%%%%%%%%%%%%%%%%%%%

In this section we argue for the privacy-duality theorem from a quantum information perspective, in particular using the quantum tasks framework of \cite{may2021quantum}. 

%%%%%%%%%%%%%%%%%%%%%%%%%%%%%%%%%%%%%%%%%%%%%%%%
\subsection{Localizing and excluding quantum information}
%%%%%%%%%%%%%%%%%%%%%%%%%%%%%%%%%%%%%%%%%%%%%%%%

A quantum task is a quantum computation where the spacetime locations of the input and output systems are specified \cite{kent2012quantum}. We will describe the task in terms of two agencies, Alice and Bob. Alice will receive quantum and classical systems as input, process them in some way, then return the outputs to Bob. Bob will prepare the inputs and give them to Alice, then after receiving Alice's outputs will check the outputs are as intended.

To make this notion of a quantum task more precise, we recall some definitions and observations from \cite{hayden2019localizing}, which were discussed in the holographic context in \cite{may2021quantum}. We begin with a definition of a quantum system being localized to a spacetime region. 

\begin{definition}\label{def:localized}
Suppose one party, Alice, holds system $X$, whose purification we label $\ket{\Psi}_{\bar{X}{X}}$. Then we say the subsystem $X$ is \textbf{localized} to a spacetime region $\R$ if a second party, Bob, for whom the state is initially unknown can prepare the $X$ system by acting on $\R$ with some channel $\mathcal{M}_{\R\rightarrow X}$. 
\end{definition}
Conversely, we will say that a system $X$ is \textbf{excluded} from a region $\R$ if Bob cannot learn anything about $X$ by accessing that region.

To make these notions more precise, we will give a quantitative condition for when a system is localized to or excluded from a subregion. To state it, we need to define two quantities. First is the \emph{mutual information}, defined by
\begin{align}
    I(\A:\B) \equiv S(\A) + S(\B) - S(\A\B).
\end{align}
Second, we define the \emph{fidelity},
\begin{align}
    F(\rho,\sigma) \equiv \text{tr} \sqrt{\sqrt{\rho}\,\sigma \sqrt{\rho}}.
\end{align}
The fidelity is equal to one if and only if $\rho=\sigma$. Two states having fidelity near one means they are nearly indistinguishable, in a sense that can be made precise \cite{wilde2013quantum}.

Using these quantities, we can characterize when it is possible to recover a quantum system from a subregion by making use of the following theorem.

\begin{theorem}\label{thm:approximaterecovery}
Consider a quantum channel $\mathcal{N}_{X\rightarrow A}$. Then there is an approximate inverse channel $(\mathcal{N}_{X\rightarrow A})^{-1}$ in the sense that
\begin{align}
    F(\ket{\Psi^+}_{\bar{X}X},\mathcal{I}_{\bar{X}}\otimes (\mathcal{N}_{X\rightarrow A})^{-1}\circ \mathcal{N}_{X\rightarrow A}(\ket{\Psi^+}_{\bar{X}X})) \geq 1-\sqrt{\epsilon}
\end{align}
if and only if
\begin{align}
    I(\bar{X}:A)_{\mathcal{I}\otimes \mathcal{N}(\ket{\Psi^+})} \geq 2S(\bar{X})-\epsilon
\end{align}
where $\ket{\Psi^+}_{\bar{X}X}$ is the maximally entangled state. 
\end{theorem}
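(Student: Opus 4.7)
The plan is to prove both directions by passing to the Stinespring dilation of $\mathcal{N}_{X\to A}$, writing it as an isometry $V_{X\to AE}$ with environment $E$, and considering the pure tripartite state $\ket{\phi}_{\bar{X}AE} \equiv (I_{\bar{X}}\otimes V_{X\to AE})\ket{\Psi^+}_{\bar{X}X}$. The central observation that organizes the whole argument is that, because this state is pure and $\ket{\Psi^+}$ is maximally entangled, one has $S(\bar{X}A) = S(E)$ and $S(A) = S(\bar{X}E)$, from which a short calculation yields
\begin{align}
    I(\bar{X}:A) = 2 S(\bar{X}) - I(\bar{X}:E).
\end{align}
Thus the mutual information hypothesis $I(\bar{X}:A)\geq 2S(\bar{X})-\epsilon$ is exactly equivalent to the approximate decoupling statement $I(\bar{X}:E)\leq \epsilon$, and I will prove each direction of the theorem in this decoupled language.

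For the reverse direction (small $I(\bar{X}:E)$ implies recovery), I would first apply the quantum Pinsker inequality to turn $I(\bar{X}:E)\leq \epsilon$ into a trace-distance bound between $\rho_{\bar{X}E}$ and the product $\rho_{\bar{X}}\otimes\rho_E$. I would then invoke Uhlmann's theorem: since $\ket{\phi}_{\bar{X}AE}$ and $\ket{\Psi^+}_{\bar{X}X}\otimes\ket{\eta}_{E'E}$ (for any purification $\ket{\eta}$ of $\rho_E$) are both purifications of states that are close in trace distance, there must exist an isometry $W_{A\to XE'}$ such that
\begin{align}
    F\bigl((I_{\bar{X}}\otimes W_{A\to XE'})\ket{\phi}_{\bar{X}AE},\,\ket{\Psi^+}_{\bar{X}X}\otimes\ket{\eta}_{E'E}\bigr) \geq 1-\sqrt{\epsilon},
\end{align}
after converting between trace distance and fidelity via Fuchs-van de Graaf. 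Tracing out $E'$ turns $W$ into the recovery channel $(\mathcal{N}_{X\to A})^{-1}$, and monotonicity of fidelity gives the desired bound on the recovered state.

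For the forward direction (recovery implies the mutual information bound), the argument is shorter: I would apply the data processing inequality for mutual information to the recovery channel to obtain $I(\bar{X}:A) \geq I(\bar{X}:X')$, where $X'$ is the output of $(\mathcal{N}_{X\to A})^{-1}\circ\mathcal{N}_{X\to A}$. Since by hypothesis this output is within fidelity $1-\sqrt{\epsilon}$ of the maximally entangled state $\ket{\Psi^+}_{\bar{X}X}$, on which $I(\bar{X}:X)=2S(\bar{X})$, continuity of mutual information (Fannes--Audenaert, applied to each entropy term after converting fidelity to trace distance) gives $I(\bar{X}:X')\geq 2S(\bar{X})-\epsilon$, completing the argument.

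The main obstacle is bookkeeping on constants: the fidelity tolerance $1-\sqrt{\epsilon}$ and the mutual information tolerance $\epsilon$ differ by a square root, and each of the three tools used (Pinsker, Uhlmann/Fuchs--van de Graaf, Fannes--Audenaert) either loses or gains a power of $\epsilon$ or of the Hilbert space dimension. To land cleanly on the square-root relation stated in the theorem one must chain the tight versions of these inequalities carefully, and in particular use the continuity bound that is linear in the trace distance rather than the weaker Fannes form. The conceptual content, however, is entirely captured by the single identity relating $I(\bar{X}:A)$ to $I(\bar{X}:E)$ on the Stinespring purification.
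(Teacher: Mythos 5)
The paper does not actually prove this statement; it simply cites \cite{schumacher2002approximate} and remarks that the coherent-information form of the result is equivalent to the mutual-information form via $I(A:B)=I(A\rangle B)+S(A)$. Your reconstruction is the standard decoupling proof that underlies that reference, and its core is sound: the identity $I(\bar{X}:A)+I(\bar{X}:E)=2S(\bar{X})$ on the Stinespring purification is correct (it holds for any pure tripartite state), and your reverse direction goes through with room to spare --- Pinsker gives $\|\rho_{\bar{X}E}-\rho_{\bar{X}}\otimes\rho_E\|_1\le\sqrt{2\ln 2\,\epsilon}$, Fuchs--van de Graaf then gives fidelity at least $1-\sqrt{(\ln 2/2)\,\epsilon}\ge 1-\sqrt{\epsilon}$, and Uhlmann plus monotonicity under the partial trace over $E'$ finishes it. (You could even get a bound linear in $\epsilon$ here by using $-2\log F(\rho,\sigma)\le D(\rho\|\sigma)$ in place of Pinsker.)

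The one genuine gap is the forward direction, and you have correctly located it but underestimated it: it is not mere bookkeeping. From $F\ge 1-\sqrt{\epsilon}$ you get trace distance at most $\sqrt{1-F^2}=O(\epsilon^{1/4})$, and Fannes--Audenaert then yields $I(\bar{X}:X')\ge 2S(\bar{X})-O(\epsilon^{1/4}\log d)$, which is not $2S(\bar{X})-\epsilon$ for any chaining of the ``tight versions'' --- both the power of $\epsilon$ and the dimension factor are wrong. The standard tool for this direction is the quantum Fano inequality (bounding the entropy exchange by the entanglement fidelity), and even that carries a $\log d$. So the literal ``if and only if'' with the exactly matching constants $\sqrt{\epsilon}\leftrightarrow\epsilon$ is not established by your argument, and indeed is not what \cite{schumacher2002approximate} proves either; the two implications hold with different (and in one direction dimension-dependent) error parameters. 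Since the paper only uses the theorem at the qualitative level ``one side small iff the other side small,'' this does not affect anything downstream, but a careful writeup should either state the two directions with their separate constants or explicitly invoke Fano for the forward implication.
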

See, for example, \cite{schumacher2002approximate} for a proof.\footnote{Note that often this theorem is stated using the coherent information $I(A\rangle B)=S(B)-S(AB)$, but this is simply related to the mutual information, $I(A:B)=I(A\rangle B)+ S(A)$.} Note that the condition in this theorem, that the inverse channel works well on the maximally entangled state, also implies that the inverse channel works well on average \cite{horodecki1999general,nielsen2002simple}. 

To apply this theorem in the context of our definition of localizing a state to a region, we take whatever process has encoded the system $X$ into the region $\R$ to be the channel $\mathcal{N}$. Then theorem \ref{thm:approximaterecovery} informs us that $X$ is localized to that region whenever, taking $\ket{\Psi^+}_{\bar{X}X}$ to be maximally entangled, the mutual information satisfies
\begin{align}
    I(\bar{X}:\R) \geq 2S(\bar{X})-\epsilon.
\end{align}
Conversely, for $\R$ to be excluded we should have
\begin{align}
    I(\bar{X}:\R) \leq \epsilon,
\end{align}
since this implies the complementary system to the degrees of freedom in $\R$ has large mutual information and so recovers $X$, and so $\R$ holds no information about $X$.

Importantly, where a quantum system is localized to or excluded from in the bulk and in the boundary are related, as is captured in the notion of entanglement wedge reconstruction \cite{czech2012gravity,headrick2014causality,wall2014maximin,jafferis2016relative,dong2016reconstruction,cotler2019entanglement}. Informally, a quantum system $X$ is localized to the entanglement wedge $E_\A$ if and only if it is localized to $\A$. More precisely, the recovery from the boundary subregion is approximate, but we can account for this using the condition for approximate recovery given in theorem \ref{thm:approximaterecovery}. The statement of entanglement wedge reconstruction we will use is
\begin{align}\label{eq:EWR}
    I(\bar{X}:E_\A) \geq 2S(\bar{X}) \Longrightarrow I(\bar{X}:\A) \geq 2S(\bar{X}) - \epsilon.
\end{align}
The parameter $\epsilon$ is small in the sense that $\epsilon \rightarrow 0$ when $G_N\rightarrow 0$. 

Considering excluded regions, the situation is similar. To address it consider a subregion $\U$ of the boundary along with its entanglement wedge $E_\U$. Our starting point is that $X$ excluded from $E_\U$ means 
\begin{align}
    I(\bar{X}:E_\U) = 0 .
\end{align}
Introduce an auxiliary AdS space, the purpose of which is to hold the purification of the initial AdS space's state. Then we have
\begin{align}
    I(\bar{X}:E_\U^c) = 2S(\bar{X})
\end{align}
where $E_\U^c$ is the complement of $E_\U$ taken in the full bulk geometry, including that of the auxiliary AdS. Now apply our statement of entanglement wedge reconstruction \ref{eq:EWR} to find
\begin{align}
    I(\bar{X}:\U^c) \geq 2S(\bar{X})-\delta.
\end{align}
But using purity again we have $I(\bar{X}:\U) \leq \delta $, so that $X$ is excluded from $\U$. 

%%%%%%%%%%%%%%%%%%%%%%%%%%%%%%%%%%%%%%%%%%%%%%%%
\subsection{The secret message task}
%%%%%%%%%%%%%%%%%%%%%%%%%%%%%%%%%%%%%%%%%%%%%%%%

We will consider a scenario where a quantum system $X$ is initially localized to a region $\C$, and should be evolved in such a way that it is later localized to a spacetime region $\R$. Additionally, we require $X$ remain excluded from a third region $\U$. We call this scenario a secret message task, as we detail below. 

\begin{definition}
A secret message task $S^{\times n}_{\epsilon, \delta}$ is defined by
\begin{enumerate*}
    \item An input region $\C$, in which system $X$ is initially localized. 
    \item An output region $\R$, into which Alice should bring system $X$. 
    \item An excluded region $\U$. 
\end{enumerate*}
System $X$ is in a maximally entangled state with a reference system $\bar{X}$, and consists of $n$ qubits. Completing the task successfully requires:
\begin{enumerate*}
    \item \textbf{Secrecy:} $X$ cannot be recovered from region $\U$, so that $I(\bar{X}:\U) \leq \delta$
    \item \textbf{Correctness:} $X$ can be recovered from region $\R$, so that $I(\bar{X}:\R)\geq 2n - \epsilon$.
\end{enumerate*}
\end{definition}
So that Bob can verify that the recovered system $X$ is or is not maximally entangled with $\bar{X}$, we give him the $\bar{X}$ system. Bob will either access $\U$ and check $X$ is not localized there, or access $\R$ and confirm that $X$ is localized there.\footnote{For pedagogical purposes, in the introduction we included a third party, Eve, who may access the region $\U$. Here we have Bob play the eavesdroppers role.}

There are two strategies we will consider for completing this task: a local strategy and a non-local strategy. In the local strategy, $X$ is sent along a causal curve from $\C$ to $\R$ that avoids $\U$. In the non-local strategy, we exploit correlation to conceal information that is sent through $\U$. We discuss each separately below. 

%%%%%%%%%%%%%%%%%%%%%%%%%%%%%%%%%%%%%%%%%%%%%%%%
\subsubsection*{Local approach to $S^{\times n}_{\epsilon,\delta}$ task}
%%%%%%%%%%%%%%%%%%%%%%%%%%%%%%%%%%%%%%%%%%%%%%%%

If there is a causal curve $\Gamma_P$ from $\C$ to $\R$ which avoids region $\U$, which we denote a \emph{private curve}, then the task is straightforward to complete: simply send $X$ along that curve. More concretely, $X$ should be recorded into some localized degrees of freedom, which then travel along $\Gamma_P$.

If $X$ could be sent noiselessly, then we would have $I(\bar{X}:\R)=2n$. More realistically, some amount of noise will occur in the process of sending $X$ to $\R$, in which case $I(\bar{X}:\R)$ will be smaller. In particular if the channel taking $X$ from $\C$ to $\R$ is denoted $\mathcal{N}$, then it being close to the identity in the sense that
\begin{align}
    \max_{\ket{\psi}_{\bar{X}X}}| \ket{\psi}_{\bar{X}X} - \mathcal{I}_A\otimes \mathcal{N}(\ket{\psi}_{\bar{X}X}) |\leq \gamma
\end{align}
implies $I(\bar{X}:\R) \geq 2n - \gamma^2/4$. We will choose the noise parameter $\epsilon$ in our secret message task to satisfy $\epsilon > \gamma^2/4$, so that we can complete this task simply in the bulk.\footnote{It would also be possible to reduce noise in sending a message along the private curve using error correction techniques, at the cost of increasing the number of qubits sent.}

%%%%%%%%%%%%%%%%%%%%%%%%%%%%%%%%%%%%%%%%%%%%%%%%
\subsubsection*{Non-local approach to the $S^{\times n}_{\epsilon,\delta}$ task}
%%%%%%%%%%%%%%%%%%%%%%%%%%%%%%%%%%%%%%%%%%%%%%%%

Even without a private curve from $\C$ to $\R$, it is possible to complete the secret message task. Typically, this is done by sharing a secret, classical key consisting of a string of bits between the sender and the receiver. The sender uses the key $k$ to apply a unitary $P_k$ to the message, which she then sends through the region $\U$. The set of unitaries $\{P_k\}$ is chosen so that averaging over them returns the maximally mixed state, concealing the message from an eavesdropper. The receiver then uses their copy of $k$ to apply $P_k^{-1}$, and recover the message. We describe the simplest such protocol in detail in appendix \ref{sec:elementarypad}.

The above protocol for sending secret messages exploits correlation in the key to achieve its goal. We can understand some general requirements on correlations for the secret message task to be completed. To understand this, consider figure \ref{fig:boundaryregions}, where we've drawn the regions $\U, \C, \R$, and
\begin{align}
    \V_1 &= \hat{D}(\hat{J}^+(\C) \cap \U' \cap \partial \Sigma), \\
    \V_2 &= \hat{D}(\hat{J}^-(\R) \cap \U' \cap \partial \Sigma).
\end{align}
Using these definitions of the regions, and theorem \ref{thm:approximaterecovery}, we arrive at the following three statements. 
\begin{itemize}
    \item \textbf{Unitarity:} $I(\bar{X}:\V_1\U)\geq 2n-\epsilon$. Follows because $\V_1 \cup \U$ contains $\hat{J}^+(\C)\cap \partial \Sigma$, into which $X$ must be localized by unitarity, so $X$ is localized to $\V_1\cup \U$.
    \item \textbf{Secrecy:} $I(\bar{X}:\U)\leq \delta$. As part of the definition of the secret message task, we require $X$ be excluded from $\U$, which is just this statement. 
    \item \textbf{Correctness:} $I(\R:\U\V_2) \geq 2n-\epsilon$. This follows because the task being successful requires $X$ be localized to $\R$, but $\U\cup \V_2$ contains $\hat{J}^-(\R)\cap \partial \Sigma$, and by unitarity $X$ is localized to its past light cone. 
\end{itemize}
Using these three statements we can prove an interesting bound on the correlations among the subsystems $\V_1,\U,\V_2$, which we give in the following lemma. 

\begin{lemma}\label{lemma:QCMIbound}
Completing the secret message task $S^{\times n}_{\delta, \epsilon}$ when there are no private curves from $\C$ to $\R$ requires
\begin{align}
    I(\V_1:\V_2\,|\,\U) \geq 2n - \epsilon'
\end{align}
where $\epsilon' = 2\epsilon+\delta$.
\end{lemma}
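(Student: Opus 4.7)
The plan is to derive the bound using only chain-rule identities for quantum mutual information together with the three constraints (Unitarity, Secrecy, Correctness) displayed immediately above the lemma. No recovery-theoretic input beyond what is already packaged into those three bounds is needed. The technical fulcrum is the elementary observation that $\bar{X}$ is an $n$-qubit maximally entangled partner of $X$ and so has entropy $n$, giving $I(\bar{X}:Y)\leq 2n$ for every $Y$; combined with the Unitarity lower bound $I(\bar{X}:\V_1\U)\geq 2n-\epsilon$, this pins down an upper bound on a conditional mutual information involving $\bar{X}$.

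First I would show $I(\bar{X}:\V_2\,|\,\V_1\U)\leq \epsilon$. This comes from the chain rule $I(\bar{X}:\V_1\U\V_2) = I(\bar{X}:\V_1\U) + I(\bar{X}:\V_2\,|\,\V_1\U)$, the trivial upper bound $I(\bar{X}:\V_1\U\V_2)\leq 2S(\bar{X}) = 2n$, and Unitarity. Intuitively, $\bar{X}$ is already near-maximally correlated with $\V_1\U$, so essentially nothing is left to share with $\V_2$ once $\V_1\U$ has been conditioned on.

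Next I would combine Correctness (which I read as $I(\bar{X}:\V_2\U)\geq 2n-\epsilon$, in line with the surrounding text about $X$ being localized to its past light cone $\hat{J}^-(\R)\cap\partial\Sigma\subseteq \V_2\cup\U$) with Secrecy via the chain rule $I(\bar{X}:\V_2\,|\,\U) = I(\bar{X}:\V_2\U)-I(\bar{X}:\U)$ to obtain $I(\bar{X}:\V_2\,|\,\U)\geq 2n-\epsilon-\delta$.

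Finally I would equate two chain-rule expansions of the single quantity $I(\bar{X}\V_1:\V_2\,|\,\U)$:
\begin{align*}
I(\V_1:\V_2\,|\,\U) + I(\bar{X}:\V_2\,|\,\V_1\U) \;=\; I(\bar{X}:\V_2\,|\,\U) + I(\V_1:\V_2\,|\,\bar{X}\U),
\end{align*}
drop the nonnegative term $I(\V_1:\V_2\,|\,\bar{X}\U)\geq 0$ on the right, and plug in the two bounds from the preceding steps to conclude $I(\V_1:\V_2\,|\,\U) \geq (2n-\epsilon-\delta)-\epsilon = 2n-\epsilon'$. The only subtle step is the first one, where the entropy bound $I(\bar{X}:\cdot)\leq 2n$ converts the Unitarity \emph{lower} bound into an \emph{upper} bound on $I(\bar{X}:\V_2\,|\,\V_1\U)$; after that the argument is pure chain-rule bookkeeping, and I do not expect any real technical obstacle.
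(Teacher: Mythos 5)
Your proof is correct, but it takes a different route from the paper's. The paper introduces an auxiliary system $Y$ purifying $\bar{X}\V_1\U\V_2$, converts unitarity and correctness into the upper bounds $I(\bar{X}:\V_2Y)\leq\epsilon$ and $I(\bar{X}:\V_1Y)\leq\epsilon$ via purity, and then grinds through an explicit chain of entropy manipulations (purity, subadditivity of $S(\bar{X}Y)$, secrecy, strong subadditivity applied to $S(\V_1\V_2Y)$). You avoid the purification entirely: your step 1 converts the unitarity \emph{lower} bound into the \emph{upper} bound $I(\bar{X}:\V_2\,|\,\V_1\U)\leq\epsilon$ using $I(\bar{X}:\V_1\U\V_2)\leq 2S(\bar{X})$ (Araki--Lieb in disguise), your step 2 packages correctness minus secrecy into $I(\bar{X}:\V_2\,|\,\U)\geq 2n-\epsilon-\delta$, and your step 3 glues these together by expanding $I(\bar{X}\V_1:\V_2\,|\,\U)$ two ways and discarding $I(\V_1:\V_2\,|\,\bar{X}\U)\geq 0$, which is where strong subadditivity enters. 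The underlying inequality content is the same (SSA plus the three task constraints; your reading of the paper's correctness condition as $I(\bar{X}:\U\V_2)\geq 2n-\epsilon$ is the intended one, the $\R$ in the paper's displayed version being a typo), but your decomposition is more modular and makes transparent which constraint is responsible for which term in $\epsilon'=2\epsilon+\delta$, at the cost of invoking the slightly less elementary identity $I(A:B)\leq 2S(A)$ in place of the paper's bare subadditivity step.
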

\begin{proof}
It will be convenient to introduce a system $Y$ which purifies $\bar{X}\V_1\U\V_2$. In terms of the purification system, unitarity and correctness become
\begin{align}
    I(\bar{X}:\V_2Y) &\leq \epsilon, \\
    I(\bar{X}:\V_1Y) &\leq \epsilon.
\end{align}
Now we use these entropic statements and basic entropy inequalities to bound the conditional mutual information. 
\begin{align}
    I(\V_1:\V_2|\U) &= S(\V_1\U)+S(\V_2\U) - S(\U) - S(\V_1\V_2\U) \,\,\,\,\,\,\,\,\,\,\,\,\,\,\,\,\,\,\,\,\,\,\,\,\,\,\,\,\,\,\,\,\,\,\,\,\,\,\,\,\,\,\,\,\,\,\,\, \text{(definition)} \nonumber \\
    &= S(\bar{X}\V_2Y) +S(\V_1\bar{X}Y) - S(\U) - S(\bar{X}Y) \,\,\,\,\,\,\,\,\,\,\,\,\,\,\,\,\,\,\,\,\,\,\,\,\,\,\,\,\,\,\,\,\,\,\,\,\,\,\,\,\,\,\, \text{(purity)} \nonumber \\
    &\geq S(\V_2\bar{X}Y)+S(\V_1\bar{X}Y)-S(\U) - S(\bar{X}) - S(Y) \,\,\,\,\,\,\,\,\,\,\,\,\,\,\,\,\,\,\,\,\,\,\,\,\,\,\, \text{(subadditivity)} \nonumber \\
    &\geq S(\V_2\bar{X}Y)+S(\V_1\bar{X}Y)-S(\U \bar{X}) - S(Y) - \delta \,\,\,\,\,\,\,\,\,\,\,\,\,\,\,\,\,\,\,\,\,\,\,\,\,\,\,\,\,\,\,\,\,\text{(secrecy)} \nonumber \\
    &\geq S(\V_2\bar{X}Y)+S(\V_1\bar{X}Y)-S(\V_1\V_2Y) - S(Y) - \delta \,\,\,\,\,\,\,\,\,\,\,\,\,\,\,\,\,\,\,\,\,\,\,\,\,\, \text{(purity)} \nonumber \\
    &\geq S(\V_2\bar{X}Y)+S(\V_1\bar{X}Y)-S(\V_1Y) - S(\V_2Y) - \delta \,\,\,\,\,\,\,\,\,\,\,\,\,\,\,\,\,\,\,\,\,\,\,\,\,\, \text{(S.S.A.)} \nonumber \\
    &\geq 2S(\bar{X}) - \delta - 2\epsilon \,\,\,\,\,\,\,\,\,\,\,\,\,\,\,\,\,\,\,\,\,\,\,\,\,\,\,\,\,\,\,\,\,\,\,\,\,\,\,\,\,\,\,\,\,\,\,\,\,\,\,\,\,\,\,\,\,\,\,\,\,\,\,\,\,\,\,\,\,\,\,\,\,\,\,\,\,\,\,\,\,\,\,\,\,\,\,\,\,\,\,\,\,\,\,\,\,\,\,\,\,\,\,\,\,\,\,\, \text{(unitarity and correctness)} \nonumber 
\end{align}
Dividing by $2$ and noting that $X\bar{X}$ is a maximally entangled state on $n$ qubits, we obtain the claimed bound.
\end{proof}

This lemma, applied to the boundary CFT, will be the key result in arguing for the privacy-duality theorem in the next section. 
%%%%%%%%%%%%%%%%%%%%%%%%%%%%%%%%%%%%%%%%%%%%%%%%
\subsection{Quantum tasks argument for privacy-duality}
%%%%%%%%%%%%%%%%%%%%%%%%%%%%%%%%%%%%%%%%%%%%%%%%

In this section we give the quantum tasks argument for the privacy-duality theorem. We repeat the theorem here for convenience. 
\vspace{0.3cm}

\noindent \textbf{Theorem \ref{thm:main}:}\emph{\textbf{(Privacy-duality)} Consider domains of dependence $\C$, ${\R}$, and ${\U}$ in the boundary of an asymptotically locally AdS spacetime, along with the corresponding bulk entanglement wedges $E_\C$, $E_\R$, and $E_\U$. Assume there is no private curve from $\C$ to $\R$ with respect to $\U$ in the boundary geometry. Define boundary regions
\begin{align}
    \V_1 &= \hat{D}(\hat{J}^+(\C) \cap \U' \cap \partial \Sigma), \\
    \V_2 &= \hat{D}(\hat{J}^-(\R) \cap \U' \cap \partial \Sigma),
\end{align}
where $\partial \Sigma$ is any Cauchy surface for the boundary which includes a Cauchy surface for $\U$. Then if there is a private curve in the bulk from $E_\C$ to $E_\R$ with respect to $E_\U$, and the minimal extremal surface homologous to $\U$ is unique, it follows that $I(\V_1:\V_2\,|\,\U) = O(1/G_N)$.}
\vspace{0.3cm}

\begin{argument}
By assumption, there is a private curve which passes through the bulk from a point $c$ in $E_{\C}$ to a point $r$ in $E_{\R}$. The curve is private because it avoids the region $E_\U$. We define a secrecy task in the bulk with $c$ as input location, $r$ as output location, and $E_\U$ as excluded region. Call this task ${S}^{\times n}_{\delta, \epsilon}$. 

Because a private curve exists in the bulk, we can use the local strategy to complete the ${S}^{\times n}_{\delta, \epsilon}$ task. The parameter $\epsilon$ should be chosen to reflect any unavoidable noise present in carrying this out, and $\delta$ chosen to reflect any information about the message which becomes available in $E_\U$. Note that a successful protocol requires we send $n$ qubits through the bulk. 

Next consider the boundary picture. We consider a secrecy task $\hat{S}^{\times n}_{\delta, \epsilon}$ in the boundary which has $\C$ as input region, $\R$ as output region, and $\U$ as excluded region. We see that any process which completes the bulk task ${S}^{\times n}_{\delta, \epsilon}$ maps in the boundary to a process which completes the boundary task $\hat{S}^{\times n}_{\delta, \epsilon}$, since $c\in \C$, $r\in \R$ and excluding $X$ from $E_\U$ in the bulk means excluding it from $\U$ in the boundary. Since ${S}^{\times n}_{\delta, \epsilon}$ can be completed then, so can $\hat{S}^{\times n}_{\delta, \epsilon}$. 

Now we apply lemma \ref{lemma:QCMIbound}, to learn that
\begin{align}\label{eq:lowerboundonperturbed}
    I(\V_1:\V_2\,|\,\U)_{\tilde{\psi}} \geq 2n - \epsilon'
\end{align}
Here $\tilde{\psi}$ is the state of the CFT on $\partial \Sigma$ when the $n$ qubits are sent through the bulk. We'll label the unperturbed state, without the $n$ qubits sent in the bulk, by $\psi$. We would like to bound the QCMI in the unperturbed state. To relate this to the bound above on the perturbed state, consider each term in the perturbed QCMI,
\begin{align}
    I(\V_1:\V_2\,|\,\U)_{\tilde{\psi}} &= S(\V_1\U)_{\tilde{\psi}} + S(\V_2\U)_{\tilde{\psi}} - S(\U)_{\tilde{\psi}} - S(\V_1\V_2\U)_{\tilde{\psi}}
\end{align}
By assumption the private curve does not enter $E_{\U}$, in the original state $\psi$. So long as the minimal extremal surface homologous to $\U$ is unique, and we restrict to $n<O(1/G_N)$, this remains true in the state $\tilde{\psi}$.\footnote{We study an example where the minimal extremal surface is degenerate in section \ref{sec:examples}} Given this, the additional qubits do not enter $E_\U$ in the perturbed state, so do not contribute their entropy to $S(\U)_{\tilde{\psi}}$. On the other hand, by the unitarity property the message is localized to $\V_1\cup \U$, so the entropy of the additional $n$ qubits do contribute to $S(\V_1\U)_{\tilde{\psi}}$. They also then contribute to the strictly larger regions entropy, $S(\V_1\V_2\U)_{\tilde{\psi}}$. Similarly, by the correctness property they contribute to $S(\V_2\U)_{\tilde{\psi}}$. This means the entropy of the additional qubits, $n$, contributes positively in $I_{\tilde{\psi}}$ twice and negatively once, so that
\begin{align}
    I(\V_1:\V_2\,|\,\U)_{\tilde{\psi}} = I(\V_1:\V_2\,|\,\U)_{\psi} + n.
\end{align}
Combining this with inequality \ref{eq:lowerboundonperturbed}, we have
\begin{align}
    I(\V_1:\V_2\,|\,\U)_\psi \geq n-\epsilon'.
\end{align}
Since $n$ can be any order less than $O(1/G_N)$, we conclude that the QCMI in the unperturbed state $\psi$ is $O(1/G_N)$. 
\end{argument}

We have referred to the above as an argument because of the statement that it is possible to complete the local strategy in the bulk. In particular, it is necessary to record $X$ into some local degrees of freedom, then route those appropriately through spacetime. If the private curve is a geodesic this is a fairly innocuous assumption. For more general curves, it might be necessary to, for example, arrange an appropriate system of mirrors to reflect photons along a path that approximates $\Gamma_P$. Given a description of the bulk effective field theory it is not immediately clear this sort of construction is possible (in the same way, given the Standard Model, it is not clear that one can construct mirrors, etc.). We assume however that the bulk theory allows an appropriate construction. Outside of this assumption however, this argument is a proof. In particular there are no loopholes like the one discussed for the connected wedge theorem \cite{may2020holographic}.

In appendix \ref{sec:whyQCMI?} we discuss in detail why it should be the conditional mutual information, rather than the mutual information, that appears in the privacy-duality theorem. Indeed, we find counterexamples in the setting where the conditional mutual information is replaced by the mutual information.

%%%%%%%%%%%%%%%%%%%%%%%%%%%%%%%%%%%%%%%%%%%%%%%%
\section{Privacy-duality theorem examples} \label{sec:examples}
%%%%%%%%%%%%%%%%%%%%%%%%%%%%%%%%%%%%%%%%%%%%%%%%

We give a few examples in this section which highlight various features of the privacy-duality theorem. 

%%%%%%%%%%%%%%%%%%%%%%%%%%%%%%%%%%%%%%%%%%%%%%%%
\subsection{Case where mutual information is small}\label{sec:notmutualinfo}
%%%%%%%%%%%%%%%%%%%%%%%%%%%%%%%%%%%%%%%%%%%%%%%%

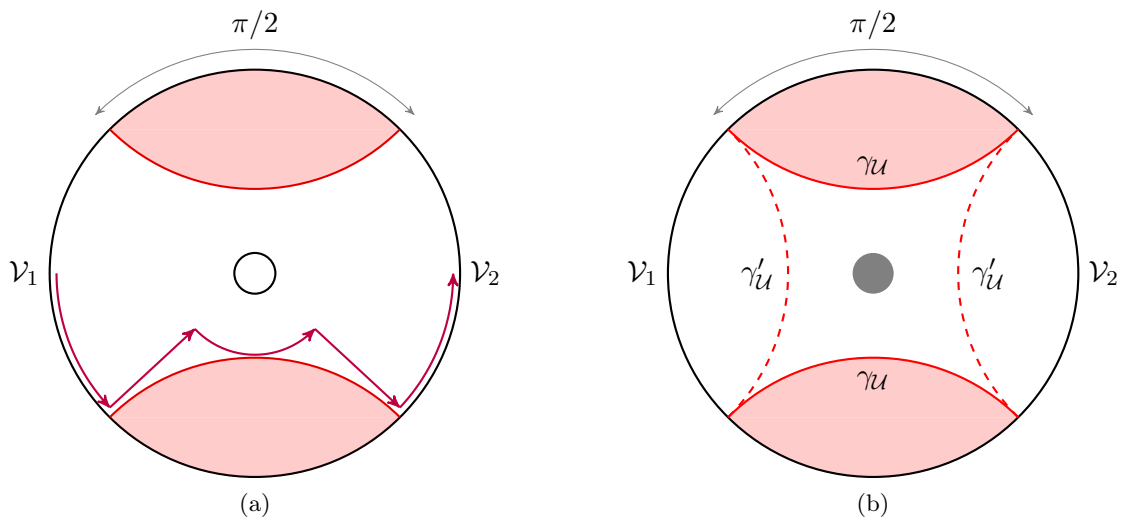
\begin{figure}
    \centering
    \subfloat[\label{fig:BTZexample}]{
    \begin{tikzpicture}[scale=0.9]
    
    \draw[thick] (0,0) circle (3);
    \draw[thick] (0,0) circle (0.3);
    
    \draw[red, thick] (-2.12, 2.12) to [out=-45,in=-135] (2.12, 2.12);
    \draw[red, thick] (-2.12, -2.12) to [out=45,in=135] (2.12, -2.12);
    
    \draw[opacity=0.2,fill=red,domain=45:135]  (-2.12, 2.12) to [out=-45,in=-135] (2.12, 2.12) plot ({3*cos(\x)},{3*sin(\x)});
    \draw[opacity=0.2,fill=red,domain=-45:-135]  (-2.12, -2.12) to [out=45,in=135] (2.12, -2.12) plot ({3*cos(\x)},{3*sin(\x)});
    
    \node[right] at (3,0) {$\V_2$};
    \node[left] at (-3,0) {$\V_1$};
    
    \draw[<->,gray,domain=135:45] plot ({3.3*cos(\x)},{3.3*sin(\x)});
    \node[above] at (0,3.3) {$\pi/2$};
    
    \draw[thick, purple,domain=-180:-137,->] plot ({2.9*cos(\x)},{2.9*sin(\x)});
    
    \draw[thick, purple,->] ({2.9*cos(-137)},{2.9*sin(-137)}) -- ({1.2*cos(-137)},{1.2*sin(-137)});
    
    \draw[thick, purple,domain=-137:-43,->] plot ({1.2*cos(\x)},{1.2*sin(\x)});
    
    \draw[thick, purple,->] ({1.2*cos(-43)},{1.2*sin(-43)}) -- ({2.9*cos(-43)},{2.9*sin(-43)});
    
    \draw[thick, purple,domain=-43:0,->] plot ({2.9*cos(\x)},{2.9*sin(\x)});
    
    \end{tikzpicture}
    }
    \hfill
    \subfloat[\label{fig:dustexample}]{
    \begin{tikzpicture}[scale=0.9]
    
    \draw[opacity=0.2,fill=red,domain=45:135]  (-2.12, 2.12) to [out=-45,in=-135] (2.12, 2.12) plot ({3*cos(\x)},{3*sin(\x)});
    \draw[opacity=0.2,fill=red,domain=-45:-135]  (-2.12, -2.12) to [out=45,in=135] (2.12, -2.12) plot ({3*cos(\x)},{3*sin(\x)});
    
    \draw[thick] (0,0) circle (3);
    \fill[gray] (0,0) circle (0.3);
    
    \draw[red, thick] (-2.12, 2.12) to [out=-45,in=-135] (2.12, 2.12);
    \draw[red, thick] (-2.12, -2.12) to [out=45,in=135] (2.12, -2.12);
    
    \node[right] at (3,0) {$\V_2$};
    \node[left] at (-3,0) {$\V_1$};
    
    \draw[<->,gray,domain=135:45] plot ({3.3*cos(\x)},{3.3*sin(\x)});
    \node[above] at (0,3.3) {$\pi/2$};
    
    \draw[red,dashed,thick] (-2.12,-2.12) to [out=45,in=-45] (-2.12,2.12);
    \draw[red,dashed,thick] (2.12,-2.12) to [out=135,in=-135] (2.12,2.12);
    
    \node[left] at (-1.3,0) {$\gamma_{\U}'$};
    \node[right] at (1.3,0) {$\gamma_{\U}'$};
    
    \node[above] at (0,1.3) {$\gamma_{\U}$};
    \node[below] at (0,-1.3) {$\gamma_{\U}$};
    
    \end{tikzpicture}
    }
    \caption{(a) Top down view of a causal curve constructed in the BTZ geometry from point c at $\phi=0$, $t=-3\pi/4$ to the boundary at $\phi=\pi$. For a small enough black hole, the curve reaches $\phi=\pi$ before the point $r$, which is at time $3\pi/4$. The curve is constructed to avoid $E_\U$ (light red). See appendix \ref{sec:whyQCMI?} for details. (b) An example with degenerate minimal extremal surfaces, constructed by starting with AdS$_{2+1}$. A small amount of matter with $O(1)$ entropy has been added to the bulk. Surfaces $\gamma_\U$ (red) and $\gamma_\U'$ (dashed red) have the same area. }
    \label{fig:examples}
\end{figure}

In the case where the the state of $\V_1\V_2\U$ is pure the conditional mutual information and mutual information are equal, $I(\V_1:\V_2\,|\,\U)=I(\V_1:\V_2)$. We might plausibly expect then that the mutual information should be large whenever there is a private curve. While this is immediately true for pure states, this idea fails for mixed states. 

An example can be constructed in the BTZ black hole geometry, as we illustrate in figure \ref{fig:BTZexample}. Region $\U$ consists of two intervals of size $\pi/2$ and centered at $\phi=\pi/2,3\pi/2$ on the $t=0$ time slice. Regions $\C$ and $\R$ are chosen to be points, and located at $\phi=0,\pi$ and $t=-3\pi/4$, $t=3\pi/4$ respectively. In appendix \ref{sec:whyQCMI?}, we find that for small enough black holes there is a causal, private curve from $c$ to $r$. In this geometry $I(\V_1:\V_2)=0$, showing the mutual information need not be large. One can check that the conditional mutual information is large however, as required by the privacy-duality theorem. 

Based on this example, we should anticipate that there are one-time pad protocols that maintain small mutual information even while sending a private message. Indeed this is the case, as we show in appendix \ref{sec:whyQCMI?}. 

%%%%%%%%%%%%%%%%%%%%%%%%%%%%%%%%%%%%%%%%%%%%%%%%
\subsection{Case with degenerate areas}
%%%%%%%%%%%%%%%%%%%%%%%%%%%%%%%%%%%%%%%%%%%%%%%%

In the privacy-duality theorem we require that the minimal extremal surface for region $\U$ be unique. Here we give an example to show that this is in fact necessary. In particular we find an example where a private curve exists, and $I(\V_1:\V_2\,|\,\U)$ is $O(1)$ by allowing degenerate minima. 

Consider the arrangement of regions shown in figure \ref{fig:dustexample}, which shows a constant time slice of global AdS$_{2+1}$. Region $\U$ consists of two intervals of size $\pi/2$ and centered at $\phi=\pi/2,3\pi/2$. The input and output regions are chosen to be points, $\C=c$ and $\R=r$, and located at $\phi=0,\pi$ and $t=-3\pi/4$, $t=3\pi/4$ respectively. A causal curve can easily travel from $c$ to $r$: A light ray traveling radially inward from $c$ reaches $\phi=\pi$ at time $\pi/4$, well before point $r$. 

Now add a small amount of matter with $O(1)$ entropy near $r=0$. Then the light ray starting at $c$ is delayed a small amount, but will still reach $\phi=\pi$ before $r$, so there is a causal curve from $c$ to $r$ in the bulk. Additionally adding this matter ensures the curve is private: there are two minimal surfaces of equal area enclosing $\U$, which we label $\gamma_\U$ and $\gamma_\U'$. To minimize the area plus entropy, the quantum maximin formula selects the disconnected configuration for $E_\U$. This ensures a curve through the bulk does not enter the entanglement wedge of $\U$. 

Label the area of one segment of $\gamma_\U$ or $\gamma_\U'$ by $A$, and consider the conditional mutual information:
\begin{align}\label{eq:QCMIisSb}
    I(\V_1:\V_2\,|\,\U) &= S(\V_1\U) + S(\V_1\U) - S(\U) - S(\V_1\V_2\U), \nonumber \\
    &= [A + S_b] + [A+S_b] - 2A - S_b, \nonumber \\
    &= S_b,
\end{align}
so that $I(\V_1:\V_2\,|\,\U)=O(1)$, which establishes that it was indeed necessary to assume the minimal surfaces were non-degenerate. 

It is interesting to understand where the gravity and quantum information arguments break down in this case. In the gravity proof, without assuming uniqueness of minimal surfaces, we can still conclude that, whenever there is a bulk Cauchy slice containing $\gamma_{\U}, \gamma_{\V_1\U}, \gamma_{\V_2\U}, \gamma_{\V_1\V_2\U}$, then
\begin{align}
    E_\U \neq E_{\V_1 \U} \cap E_{\V_2 \U}. 
\end{align}
Indeed we see this holds in our example. From here however we cannot conclude that the two boundaries of these regions have different areas. Meanwhile in the quantum information argument, we find that inserting $n$ qubits in the bulk may move the location of the entanglement wedge $E_\U$, even for $n<O(1/G_N)$. Thus while the original state $\psi$ may have a private curve, the perturbed state may not, and we cannot complete the argument. In the example of figure \ref{fig:dustexample}, we could send $n<S_b$ qubits before $E_\U$ moves and removes all private curves from the geometry. Thus in that case we still learn that
\begin{align}
    I(\V_1:\V_2\,|\,\U)_{{\psi}} \geq S_b
\end{align}
Which as we saw in equation \ref{eq:QCMIisSb} holds and is actually saturated.

%%%%%%%%%%%%%%%%%%%%%%%%%%%%%%%%%%%%%%%%%%%%%%%%
\subsection{Counterexample to converse of the theorem}
%%%%%%%%%%%%%%%%%%%%%%%%%%%%%%%%%%%%%%%%%%%%%%%%

We can ask if the converse statement to the privacy-duality theorem holds. That is, if $I(\V_1:\V_2\,|\,\U)=O(1/G_N)$, does that imply the existence of a private curve? We will show the answer is no. 

Take region $\U$ to consist of two intervals of size $\pi/4$ and centered at $\phi=\pi/2,3\pi/2$. The input and output regions are chosen to be points $\C=c$ and $\R=r$, and located at $\phi=0,\pi$ and $t=-\pi/2$, $t=\pi/2$ respectively. Then there is just barely a causal curve from $c$ to $r$: the light ray sent radially inward from $c$ ends on $r$. Now, add a small amount of matter to the bulk. This will delay the light ray from $c$ to $r$, so that now there is no private curve. At the same time, the conditional mutual information is $O(1/G_N)$. 

We have found that there exist choices of $\C,\R, \U$ such that the converse to the privacy-duality theorem does not hold. A more interesting statement however is the following. Given regions $\V_1, \V_2, \U$, and assuming $I(\V_1:\V_2\,|\,\U)=O(1/G_N)$, does there always exist a choice of regions $\C, \R$ such that $\V_1=\hat{D}(J^+(\C)\cap \U'\cap \partial \Sigma)$, $\V_2=\hat{D}(J^-(\R)\cap \U'\cap \partial \Sigma)$ and there is a private curve from $\C$ to $\R$? We have not yet constructed a counterexample to this statement. We do not expect it holds however, on the grounds that the analogous statement for the similar connected-wedge theorem does not hold. 

%%%%%%%%%%%%%%%%%%%%%%%%%%%%%%%%%%%%%%%%%%%%%%%%
\section{Discussion}\label{sec:discussion}
%%%%%%%%%%%%%%%%%%%%%%%%%%%%%%%%%%%%%%%%%%%%%%%%

The connection between entanglement and geometry in AdS/CFT has advanced our understanding of how gravitational physics can be recorded into a quantum mechanical system. For instance, Einsteins equations can be derived from entanglement physics \cite{lashkari2014gravitational,swingle2014universality,faulkner2017nonlinear,lewkowycz2018holographic}, and bulk energy conditions follow from entropy inequalities \cite{lashkari2015inviolable,lashkari2016gravitational}. While the Ryu-Takayanagi formula relates boundary entanglement to bulk extremal surfaces, the connected wedge theorem and privacy-duality theorems add a complementary perspective on the entanglement-geometry connection in relating boundary entanglement to bulk light cones. 

The privacy-duality theorem strengthens the connection between bulk light cones and boundary entanglement in a number of ways. For instance, the quantum tasks arguments for both theorems are not entirely rigorous. Having two instances where thinking about quantum tasks leads to true statements (they are verified by geometric proofs) supports the validity of the quantum tasks approach. Additionally, we may be able to find connections between the two theorems, or to find commonalities between them that suggest novel directions. We explore some of these ideas in the comments below. 

Finally, we note that from a geometric perspective we might have expected a connection between bulk extremal surfaces and light cones. On the basis of the Ryu-Takayanagi formula, we could then expect a relationship between bulk light cones and boundary entanglement. What is surprising however is that there is a separate, direct link between bulk light cones and boundary entanglement coming from the quantum information perspective. 

%%%%%%%%%%%%%%%%%%%%%%%%%%%%%%%%%%%%%%%%%%%%%%%%%%%%%%%%
\subsection{Relationship with the connected wedge theorem}
%%%%%%%%%%%%%%%%%%%%%%%%%%%%%%%%%%%%%%%%%%%%%%%%%%%%%%%%

Recall that the connected wedge theorem \cite{may2020holographic} says that the existence of a scattering region implies boundary regions $\V_1,\V_2$ have $I(\V_1:\V_2)=O(1/G_N)$. The connected wedge theorem can be understood as following from entanglement requirements for performing quantum teleportation\footnote{Earlier references \cite{may2019quantum,may2020holographic} instead used entanglement requirements for performing non-local quantum computations, but this is mainly for technical reasons (stronger bounds are known in that case). Similar arguments can be made instead using teleportation. See the introduction of \cite{may2021thesis} for a sketch of this. }, while the privacy-duality theorem is related to the quantum one-time pad. Teleportation and the one-time pad are closely related constructions. As one example, teleportation sends messages secretly, since the two classical measurement outcomes which are transmitted are uncorrelated with the message qubit. Given this relationship in the quantum information aspects of the two theorems, it is plausible there is also a geometric relationship between the two theorems.

One observation is that in cases where the state on $\V_1\V_2\U$ is pure, the conditional mutual information is equal to the mutual information, $I(\V_1:\V_2\,|\,\U)=I(\V_1:\V_2)$. It may be simplest to establish a geometric relationship between the two theorems in this setting. 

%%%%%%%%%%%%%%%%%%%%%%%%%%%%%%%%%%%%%%%%%%%%%%%%%%%%%%%%
\subsection{Causal conditions for other entropy inequalities}
%%%%%%%%%%%%%%%%%%%%%%%%%%%%%%%%%%%%%%%%%%%%%%%%%%%%%%%%

The connected wedge theorem gives a causal perspective on the transition in the mutual information from $O(1)$ to $O(1/G_N)$, and the privacy-duality theorem gives a causal perspective on a similar transition in the conditional mutual information. The mutual information is associated with the subadditivity property, $I(A:B)\geq 0$, while the conditional mutual information is associated with strong subadditivity $I(A:C|B)\geq 0$. 

Given this parallel between the connected wedge and privacy-duality theorems, it is natural to speculate that other quantities associated with entropy inequalities will similarly be related to a causal condition. Perhaps the next simplest quantities are those associated with the Araki-Lieb inequality or the monogamy of mutual information \cite{headrick2014general}. Other entropy inequalities are discussed in \cite{bao2015holographic}. 

%%%%%%%%%%%%%%%%%%%%%%%%%%%%%%%%%%%%%%%%%%%%%%%%
\subsection{Implications for quantum information theory}
%%%%%%%%%%%%%%%%%%%%%%%%%%%%%%%%%%%%%%%%%%%%%%%%

In the case of the connected wedge theorem, the connection between a quantum task and AdS/CFT had implications in both directions. The necessity of entanglement for a non-local computation task implied the connected wedge theorem, a novel result in holography. In the other direction, the possibility of local computations happening in the bulk was argued to imply certain non-local computations can be performed with entanglement linear in the input size, less entanglement than the best known constructions in many cases \cite{may2020holographic}. 

In the case of the privacy-duality theorem, we can also ask what the implications are for quantum information theory. Consider that when a bulk private curve exists, we can send a private message consisting of $n < O(1/G_N)$ qubits. Then in the boundary we know the conditional mutual information is at most order $O(1/G_N)$, so can conclude no more than linear conditional mutual information is involved in sending a private message. Of course, existing constructions (like those given in the appendices) achieve this, so we have not learned anything new. Nonetheless this argument still seems valuable, for two reasons. First, it is a very similar argument to the one given for the connected wedge theorem, and reaches a conclusion that is known independently to be correct. In this way it indirectly supports the argument for linear entanglement in non-local computation. Secondly, it is an interesting thought experiment to understand how much about quantum information theory is implied by the geometry of AdS spacetimes. 

Another implication for quantum information from AdS/CFT occurring in this context arises in the BTZ geometries discussed in appendix \ref{sec:whyQCMI?}. In that context one can find examples where there is a private curve but the mutual information $I(\V_1:\V_2)$ is small. We were previously not aware it was possible to send secret messages with small mutual information, but found this to be implied by these geometries. Motivated by this gravitational construction, we found the quantum information protocol given in appendix \ref{sec:whyQCMI?}. While this construction is simple and could have been found in other ways, it is interesting that the route to this quantum information protocol started with an observation about AdS geometries. We are hopeful that there is more to be learned about quantum information by pursuing further holographic quantum tasks. 

\vspace{0.3cm}
\noindent \textbf{Acknowledgements}
\vspace{0.3cm}

I thank Kfir Dolev, Sam Cree and Mark Van Raamsdonk for helpful discussions. Aidan Chatwin-Davies provided valuable feedback on this manuscript. I am supported by a C-GSM award given by the National Science and Engineering Research Council of Canada. 

\appendix 

%%%%%%%%%%%%%%%%%%%%%%%%%%%%%%%%%%%%%%%%%%%%%%%%%%%%%%%%
\section{Proof of lemma \ref{lemma:QCMIsaturation}}\label{sec:saturationlemma}
%%%%%%%%%%%%%%%%%%%%%%%%%%%%%%%%%%%%%%%%%%%%%%%%%%%%%%%%

We repeat the lemma here for reference, and outline the proof below. The proof relies on some results from \cite{akers2020quantum}, which we cite as we go. Note that our proof consists of assembling some comments from \cite{headrick2014general, wall2014maximin, akers2020quantum}, but to our knowledge is not stated explicitly elsewhere.
\vspace{0.2cm}

\noindent \emph{\textbf{Lemma \ref{lemma:QCMIsaturation}:}
Assume the quantum focusing conjecture \cite{akers2020quantum}. Then if the area terms in $I(\A:\C\,|\,\B)$ cancel, leaving only a possible bulk entropy contribution then all of the following statements hold:
\begin{enumerate*}
    \item There exists a bulk Cauchy surface $\Sigma$ such that all of $\gamma_\B$, $\gamma_{\A\B}$, $\gamma_{\B\C}$, $\gamma_{\A\B\C}$ are contained $\Sigma$, and further $S_{gen}[\gamma_{\B}], S_{gen}[\gamma_{\A\B}],S_{gen}[ \gamma_{\B\C}], S_{gen}[\gamma_{\A\B\C}]$ are minimal in that slice. 
    \item $\area (\partial E_{\A\B\C}) = \area({\partial [E_{\A\B} \cup E_{\B\C}]})$
    \item $\area(\partial E_{\B}) = \area(\partial [E_{\A\B} \cap  E_{\B\C}])$
    \item The part of $\gamma_{\A\B} \cap \gamma_{\B\C}$ which has $E_{\A\B}$ on one side and $E_{\B\C}$ on the other is empty.
\end{enumerate*}}

\begin{proof}
First, suppose that condition 1 holds. Then define a region $H_{\B}$ by $D(E_{\A\B}\cap E_{\B\C}\cap \Sigma)$ and a second surface by $H_{\A\B\C} = D([E_{\A\B}\cup E_{\B\C}]\cap \Sigma)$. Define the boundaries $\tilde{\gamma}_{\B} = \partial H_\B$ and $\tilde{\gamma}_{\A\B\C}=\partial H_{\A\B\C}$. One may then check that 
\begin{align}\label{eq:intersections}
    \area(\partial H_{ABC}) + \area(\partial H_B) = \area(\partial E_{AB}) + \area( \partial E_{BC}) - 2\,\area(\partial E_{\A\B}\cap E_{\B\C}).
\end{align}
Next, recall that the conditional mutual information is given by
\begin{align}
    I(\A:\C|\B) &= S_{gen}[E_{\A\B}] + S_{gen}[E_{\B\C}] - S_{gen}[E_{\B}] - S_{gen}[E_{\A\B\C}]
\end{align}
We will add and subtract the term $S_{gen}[H_{\A\B\C}]+ S_{gen}[H_\B]$ from this expression, 
\begin{align}
    I(\A:\C|\B) &= \left(S_{gen}[E_{\A\B}] + S_{gen}[E_{\B\C}] -S_{gen}[H_{\A\B\C}]-S_{gen}[H_\B]\right) \nonumber \\
    & + \left(S_{gen}[H_{\A\B\C}] - S_{gen}[E_{\A\B\C}] \right) \nonumber \\
    & + \left(S_{gen}[H_\B] -S_{gen}[E_{\B}] \right)
\end{align}
Lets consider each line here separately. In the first line, we can use \ref{eq:intersections} to see that this is given by $2 \,\area(\partial E_{\B\C}\cap \partial E_{\A\B})$ plus a bulk entropy term. The second and third lines are separately positive, because $S_{gen}[E_{\A\B\C}]$ and $S_{gen}[E_\B]$ are minimal for their respective boundary regions. Given this, we can conclude that $\partial E_{\B\C}\cap \partial E_{\A\B} = \emptyset$ for the overall expression to lack an area term. This gives condition $4$. Further, the area terms in the second and third lines must separately be zero. This gives conditions $2$ and $3$. 

Finally, we consider the case where condition 1 fails, so that there is no single Cauchy slice in which all of the surfaces are minimal. Then an argument from \cite{wall2014maximin,akers2020quantum} establishes that there will be an area term. In brief the argument is as follows: 
\begin{itemize*}
    \item Show there is always a surface in which both $S_{gen}[E_\B]$ and $S_{gen}[E_{\A\B\C}]$ are minimal (See theorem 5 of \cite{akers2020quantum}).
    \item Use $E_{\A\B}$ and $E_{\B\C}$ to construct regions $E_{\A\B}'$ and $E_{\B\C}'$ which lie in $\Sigma$, have less generalized entropy, and are homologous to $\A\B$ and $\B\C$ respectively. In fact these will have less generalized entropy at $O(1/G_N)$ (See section 3.2 of \cite{akers2020quantum}, note that this step assumes the quantum focusing conjecture).
    \item Use the above argument to show that $S_{gen}[E_{\A\B}] + S_{gen}[E_{\B\C}] - S_{gen}[E_{\B}] - S_{gen}[E_{\A\B\C}]$ is non-negative, at least considering the $O(1/G_N)$ terms.\footnote{In fact this quantity is positive, even considering bulk entropy terms, though we didn't show this above. See section 4.2 of \cite{akers2020quantum}.}
\end{itemize*}
This establishes that the conditional mutual information will be at least as large as $(S_{gen}[E_{AB}]-S_{gen}[E_{\A\B}'])+(S_{gen}[E_{\B\C}]-S_{gen}[E_{\B\C}'])=O(1/G_N)$ whenever condition 1 doesn't hold. 
\end{proof}

%%%%%%%%%%%%%%%%%%%%%%%%%%%%%%%%%%%%%%%%%%%%%%%%
\section{The elementary one-time pad}\label{sec:elementarypad}
%%%%%%%%%%%%%%%%%%%%%%%%%%%%%%%%%%%%%%%%%%%%%%%%

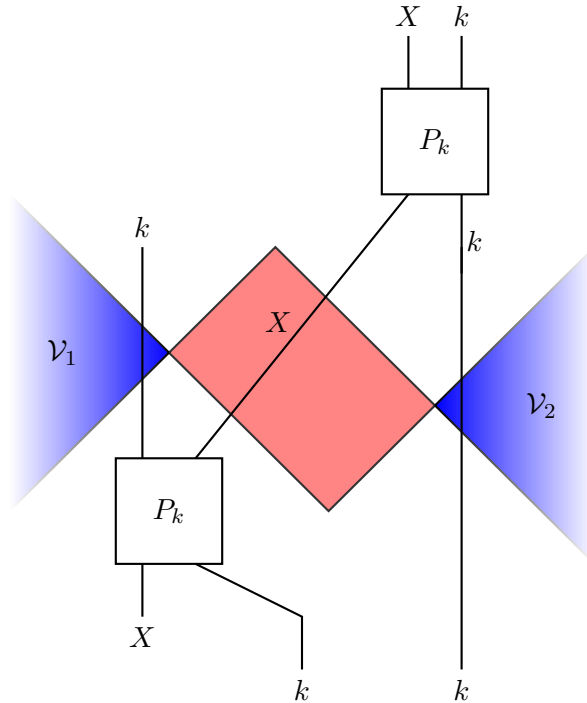
\begin{figure}
    \centering
    
    \begin{tikzpicture}[scale=0.7]
    
    \draw[fill=red!60!,opacity=0.8,thick] (1,4) -- (3,6) -- (6,3) -- (4,1) -- (1,4);
    \filldraw[thick, color = black, fill = blue, path fading = east] (6,3) -- (9,6) -- (9,0) -- (6,3);
    \filldraw[thick, color = black, fill = blue, path fading = west] (1,4) -- (-2,7) -- (-2,1) -- (1,4);
    
    \draw[black,thick] (0,0) rectangle (2,2);
    \node at (1,1) {$P_k$};
    
    \draw[black,thick] (5,7) rectangle (7,9);
    \node at (6,8) {$P_k$};
    
    \draw[thick] (0.5,-1) -- (0.5,0);
    
    \draw[black,thick] (1.5,0) -- (3.5,-1) -- (3.5,-2); 
    %\draw[black,thick] (6.5,1) -- (6.5,4.5);
    
    \draw[black,thick] (6.5,-2) -- (6.5,6);
    
    %\draw[black,thick] (6,1) -- (5,2.5) -- (5,3.25) -- (6,4.5);
    
    \draw[thick] (6.5,5.5) -- (6.5,7);
    \node[below right] at (6.4,6.5) {$k$};
    
    \draw[black, thick] (0.5,2) -- (0.5,6); 
    \node[above] at (0.5,6) {$k$};
    
    \draw[black,thick] (5.5,9) -- (5.5,10);
    \node[above] at (5.5,10) {$X$};
    
    \draw[black,thick] (6.5,9) -- (6.5,10);
    \node[above] at (6.5,10) {$k$};
    
    \draw[thick] (1.5,2) -- (5.5,7);
    \node[above left] at (3.5,4.2) {$X$};
    
    \node[below] at (0.5,-1) {$X$};
    
    \node[below] at (3.5,-2) {$k$};
    \node[below] at (6.5,-2) {$k$};
    
    \node at (-1,4) {$\V_1$};
    \node at (8,3) {$\V_2$};
    
    \end{tikzpicture}

    \caption{The elementary one-time pad, used in a spacetime context. A pair of correlated, classical bit strings $k$ is prepared at an early time. One copy is sent through $\V_2$ to near $r$. Another copy is sent near $c$, then onward through $\V_1$. Near $c$ the string $k$ is used to encode the message system $X$. The encoded $X$ is sent through $\U$, then decoded using the copy of $k$ near $r$. This procedure leads to large $I(\V_1:\V_2)$ and large $I(\V_1:\V_2\,|\,\U)$.}
    \label{fig:onetimepadinpsacetime}
\end{figure}

Suppose Alice holds a single qubit on a system labelled $X$. We'll assume this is in a maximally entangled state with a reference qubit $\bar{X}$. Alice will attempt to send $X$ to Bob so that Bob holds maximal entanglement with $\bar{X}$. Alice can send qubits to Bob, but only over a public quantum channel. This means that Eve, an eavesdropper, can choose to either allow Alice's message to go through untouched, or to intercept her transmission and receive the qubits intended for Bob. In this section we study a simple protocol which sends the qubit $X$ securely over a public channel using 2 shared, perfectly correlated classical bits as a resource. The generalization to sending $n$ qubits is obvious. This is a special but instructive case of the more general one-time pad constructions discussed in the main text. 

Label the two private, classical, random bits shared by Alice and Bob by $k=(k_1,k_2)$. Then they can use the following procedure to send the qubit $X$ securely. 
\begin{protocol}\textbf{Elementary quantum one-time pad:}
\begin{enumerate*}
    \item Alice applies $P_k=X_{k_1}Z_{k_2}$ to $X$. 
    \item Alice inputs $X$ to the public quantum channel.
    \item Bob applies $P_k=X_{k_{1}}Z_{k_{2}}$ to $X$.
\end{enumerate*}
\end{protocol}
We need to verify that this protocol is both secure (Eve can't learn anything when she intercepts) and correct (Bob recovers $\ket{\psi}$ when Eve does not intercept). To understand both of these, consider the $\bar{X}XAB$ joint state after Alice has applied the encoding procedure,
\begin{align}
    \rho_{\bar{X}XAB} = \frac{1}{4} \sum_k (\mathcal{I}\otimes P_k) \ketbra{\Psi^+}{\Psi^+}_{\bar{X}X}(\mathcal{I}\otimes P_k)\otimes \ketbra{k}{k}_A\otimes \ketbra{k}{k}_{B}.
\end{align}
If Eve intercepts she gains the $X$ system, while $A$, $B$ are always held by Alice and Bob respectively. A straightforward calculation reveals $\rho_{\bar{X}X}=\mathcal{I}/4$, so that $I(\bar{X}:X)_{\rho}=0$ and Eve has learned nothing about the secret message. This establishes that the protocol is secure. To see it is also correct, notice that when Bob receives $X$ he can measure $B$ to learn $k$, then apply the $P_k$ operator again so that he holds the $X$ subsystem of $\ket{\Psi^+}_{\bar{X}X}$, and so holds maximal entanglement with $\bar{X}$. 

We can use the one-time pad protocol described above to complete the secret message task. An embedding of the protocol into spacetime is shown in figure \ref{fig:onetimepadinpsacetime}. Notice that the $A$ system passes through the $\V_1$ region, and $B$ passes through $\V_2$. Thus using this protocol we would have large mutual information $I(\V_1:\V_2)$, in addition to a large conditional mutual information. More generally however the large mutual information is not necessary, as we discuss in the next appendix. 

%%%%%%%%%%%%%%%%%%%%%%%%%%%%%%%%%%%%%%%%%%%%%%%%
\section{Why the conditional mutual information?}\label{sec:whyQCMI?}
%%%%%%%%%%%%%%%%%%%%%%%%%%%%%%%%%%%%%%%%%%%%%%%%

\begin{figure}
    \centering
    
    \begin{tikzpicture}[scale=0.7]
    
    \draw[fill=red!60!,opacity=0.8,thick] (1,4) -- (3,6) -- (6,3) -- (4,1) -- (1,4);
    \filldraw[thick, color = black, fill = blue, path fading = east] (6,3) -- (9,6) -- (9,0) -- (6,3);
    \filldraw[thick, color = black, fill = blue, path fading = west] (1,4) -- (-2,7) -- (-2,1) -- (1,4);
    
    \draw[black,thick] (0,0) rectangle (2,2);
    \node at (1,1) {$P_k$};
    
    \draw[black,thick] (5,7) rectangle (7,9);
    \node at (6,8) {$P_k$};
    
    \draw[thick] (0.5,-1) -- (0.5,0);
    
    \draw[black,thick] (1.5,0) -- (3.5,-1) -- (3.5,-2); 
    \draw[black,thick] (6.5,1) -- (6.5,4.5);
    
    \draw[black,thick] (5.5,0) rectangle (7,1);
    \node at (6.25,0.5) {$\oplus $};
    
    \draw[black,thick] (6.5,-2) -- (6.5,0);
    
    \node at (7,3) {$l$};
    \node at (4.25,3) {$k\oplus l$};
    
    \draw[black,thick] (6,1) -- (5,2.5) -- (5,3.25) -- (6,4.5);
    
    \draw[thick] (5.5,4.5) rectangle (7,5.5);
    \node at (6.25,5) {$\oplus $};
    
    \draw[thick] (6.5,5.5) -- (6.5,6) -- (7.5,7) -- (7.5,10);
    \node[below left] at (6,6.5) {$k$};
    \node[above] at (7.5,10) {$l$};
    
    \draw[black, thick] (0.5,2) -- (0.5,6); 
    \node[above] at (0.5,6) {$k$};
    
    \draw[black,thick] (5.5,9) -- (5.5,10);
    \node[above] at (5.5,10) {$X$};
    
    \draw[black,thick] (6.5,9) -- (6.5,10);
    
    \node[above] at (6.5,10) {$k$};
    
    \draw[thick] (1.5,2) -- (5.5,7);
    \node[above left] at (3.5,4.2) {$X$};
    
    \node[below] at (0.5,-1) {$X$};
    
    \node[below] at (3.5,-2) {$k$};
    \node[below] at (6.5,-2) {$k$};
    
    \draw[black,thick] (6,5.5) -- (6,7);
    
    \draw[black,thick] (6,-2) -- (6,0);
    \node[below] at (6,-2) {$l$};
    
    \node at (-1,4) {$\V_1$};
    \node at (8,3) {$\V_2$};
    
    \end{tikzpicture}
    \caption{At an early time the bit strings $l$ and $k$ are prepared. Their sum mod 2, $l\oplus k$, is sent directly through $\U$ to near $r$, while $l$ is sent through $\V_2$ to near $r$. Meanwhile, $k$ is sent near $c$ then onwards through $\V_1$. Near $c$ the string $k$ is used to encode the message, which is then sent through $\U$. Meanwhile, near $r$, $l$ and $l\oplus k$ are used to calculate $k$, which is then used to decode the message. This procedure leads to $I(\V_1:\V_2)=0$ and $I(\V_1:\V_2\,|\,\U)$ large. }
    \label{fig:trickyonetimepad}
\end{figure}

In the simplest realization of the one-time pad (see appendix \ref{sec:elementarypad}), the mutual information, as well as the conditional mutual information, is as large as the message size. One might ask then why the conditional mutual information rather than the mutual information appears in the privacy-duality theorem. In this appendix we explain that 1) from a bulk perspective, the existence of a private curve does not imply large mutual information, and provide an explicit example and 2) in the boundary it is possible to send secret messages while keeping $I(\V_1:\V_2)$ small. 

Let's begin with the bulk perspective. We will construct a geometry and choice of regions $\C,\R,\U$ that has a private curve but $I(\V_1:\V_2)=0$. Consider the BTZ black hole geometry, with metric
\begin{align}
    ds^2 = - (r^2-r_+^2) dt^2 + \left( \frac{1}{r^2-r_+^2}\right)dr^2 + r^2 d\theta^2.
\end{align}
We have set the AdS length to $1$. Take the region $\U$ to be two intervals on the $t=0$ surface of angular size $\pi/2$. Locate them antipodally, centered on $\theta=\pi/2,3\pi/2$. Note that the entanglement wedge $E_\U$ will consist of two separated regions, and not enclose the black hole. Take the input and output regions $\C$, $\R$ to be points $c$, $r$ located at $(\theta,t)$ positions
\begin{align}
    c = (0,-3\pi/4+\epsilon), \\
    r = (\pi,3\pi/4-\epsilon).
\end{align}
For $\epsilon$ small and positive, there is no private curve in the boundary. Next construct causal (non-geodesic) curves as follows. 
\begin{enumerate*}
    \item Begin at $c$ and travel along the boundary until reaching $\theta=\pi/4$
    \item Travel radially inward until reaching $r=r_\U$, the deepest $r$ value reached by the minimal surface for one segment of $\U$. 
    \item Then travel at constant $r$ until reaching $\theta=3\pi/4$.
    \item Travel outward radially to the boundary.
    \item Travel along the boundary until reaching $\theta=\pi$.
\end{enumerate*}
See figure \ref{fig:BTZexample}. By construction this avoids entering the entanglement wedge of $\U$. The deepest radius reached by the minimal surface attached to one segment of $\U$ is given by
\begin{align}
    r_\U = r_+ \cosh\left( \frac{r_+\pi}{2} \right).
\end{align}
Using this we can find the time taken for the above causal curve,
\begin{align}
    \Delta t = \frac{\pi}{2} + \frac{r_+\pi}{2} + \frac{\pi}{2} \cosh\left(\frac{r_+\pi}{2}\right).
\end{align}
This should be less than the time difference between $c$ and $r$, which is $3\pi/2$. Thus this curve reaches $r$ and is private if $\Delta t \leq 3\pi/2$, which simplies to
\begin{align}\label{eq:rplusrequirement}
    r_+ + \cosh\left( r_+ \pi/2\right) \leq 2.
\end{align}
For a small enough black hole this is satisfied. 

Now consider the regions $\V_1$ and $\V_2$, which are intervals of size $\pi/2$ centered at $\theta=0$ and $\theta=\pi$. The entanglement wedge for $\V_1\cup \V_2$ will be disconnected so that $\gamma_{\V_1\V_2}=\gamma_{\V_1}\cup \gamma_{\V_2}$. This gives $I(\V_1:\V_2)=0$, so that we have a private curve and small mutual information, as claimed could occur. 

In contrast, consider what occurs with the conditional mutual information. One may check that for $r_+$ satisfying equation \ref{eq:rplusrequirement}, the entanglement wedge of an interval of size $3\pi/2$ will include the black hole. In particular $E_{\V_1\U}$, $E_{\U\V_2}$ and $E_{\V_1\V_2\U}$ include the black hole. Further, the black hole is not included in $E_{\U}$ as noted above. This leads to
\begin{align}
    I(\V_1:\V_2\,|\,\U) =  \frac{\text{Area}[\gamma_{BH}]}{4G_N},
\end{align}
so the conditional mutual information is $O(1/G_N)$, as follows from the privacy-duality theorem. 

We can also understand why $I(\V_1:\V_2)$ need not be large from the boundary perspective, by giving a method of sending secret messages that keeps $I(\V_1:\V_2)$ small. Consider the arrangement of regions shown in figure \ref{fig:trickyonetimepad}. At an early time, prepare the state
\begin{align}
    \rho_{WYZ} = \sum_{k,l} \ketbra{k}{k}_W\otimes \ketbra{l\oplus k}{l\oplus k}_Y\otimes \ketbra{l}{l}_Z.
\end{align}
To see how to make use of this resource state, consider figure \ref{fig:trickyonetimepad}. System $W$ is sent through $\C$ then $\V_1$, $Z$ though $\V_2$ and then to $r$, and $Y$ is sent directly through region $\U$ to $r$. In region $\C$ the encoding procedure described for the elementary one time pad is performed using the bit $k\oplus l$ as key. The encoded qubits are then sent through $\U$ as well. Then the state on $\V_1\V_2\U$ is
\begin{align}
    \rho_{\V_1\V_2\U \bar{X}} = \sum_{k,l} \ketbra{k}{k}_{\V_1} \otimes [\ketbra{l\oplus k}{l\oplus k}_Y\otimes (\mathcal{I}\otimes P_{k})\ketbra{\Psi^+}{\Psi^+}_{X\bar{X}}(\mathcal{I}\otimes P_{k})] \otimes \ketbra{l}{l}_{\V_2}
\end{align}
where $YX$ together make up the $\U$ system. Notice that $k,l$ are independent and random, so $I(\V_1:\V_2)=0$. Nonetheless this state maintains secrecy and correctness. To see secrecy, notice that $k\oplus l$ is independent of $k$, so although $k\oplus l$ is sent through $\U$ it provides no assistance to Eve in recovering $X$. To see correctness, notice that to decode the message, the receiver can use the $\V_2$ system to learn $l$, then add this to $k\oplus l$ to recover $k$, then uses $k$ to undo $P_k$. 

While the above protocol avoids having large mutual information, a straightforward calculation reveals the conditional mutual information is $2S(\bar{X})$. We can also understand this intuitively by noting that
\begin{align}
    I(\V_1:\V_2\,|\, \U) = I(\V_1:\V_2\U) - I(\V_1:\U).
\end{align}
Thus $I(\V_1:\V_2\,|\, \U)$ is the amount of additional information we learn about $\V_1$ by gaining $\V_2$, if we already know $\U$. In our setting, once we know $l\oplus k$ (which is inside of $\U$), we can learn a lot about $k$ (which is in $\V_1$) by gaining $l$ (which is in $\V_2$). 

\bibliographystyle{unsrt}
\bibliography{biblio.bib}

\end{document}